\documentclass{eptcs}

\usepackage{underscore}
\usepackage[utf8]{inputenc}
\usepackage{amsmath}
\usepackage{amsthm}
\usepackage{bm}
\usepackage{semantic}
\usepackage{nicefrac}
\usepackage{marginnote}

\def\appendix{\par
\section*{APPENDIX}
\setcounter{section}{0}
\setcounter{subsection}{0}
\setcounter{lmm}{0}
\setcounter{thm}{0}
\def\thesection{\Alph{section}} }

\newcounter{thm}
\newcounter{lmm}
\newcounter{coro}
\newcounter{propo}
\newcounter{defin}
\newcounter{exmpl}

\newenvironment{theorem}[1][]{\par\vspace{\baselineskip}\noindent\refstepcounter{thm}\textbf{Theorem \thethm #1:}}{\\ \ignorespacesafterend}

\newenvironment{lemma}[1][]{\par\vspace{\baselineskip}\noindent\refstepcounter{lmm}\textbf{Lemma \thelmm #1:}}{\\ \ignorespacesafterend}

\newenvironment{defn}[1][]{\par\vspace{\baselineskip}\noindent\refstepcounter{defin}\textbf{Definition \thedefin #1:}}{}
\newenvironment{example}{\par\vspace{\baselineskip}\noindent\refstepcounter{exmpl}\textbf{Example \theexmpl:}}{}

\title{A Three-Valued Semantics for Typed Logic Programming}
\author{João Barbosa \quad\quad Mário Florido \quad\quad Vítor Santos Costa \institute{Faculty of Science of the University of Porto, Portugal}}

\begin{document}

\maketitle

\begin{abstract}
Types in logic programming have focused on conservative approximations of program semantics by regular types, on one hand, and on type systems based on a prescriptive semantics defined for typed programs, on the other. In this paper, we define a new semantics for logic programming, where programs evaluate to true, false, and to a new semantic value called \emph{wrong}, corresponding to a run-time type error. We then have a type language with a separated semantics of types. Finally, we define a type system for logic programming and prove that it is semantically sound with respect to a semantic relation between programs and types where, if a program has a type, then its semantics is not \textit{wrong}. Our work follows Milner's approach for typed functional languages where the semantics of programs is independent from the semantic of types, and the type system is proved to be sound with respect to a relation between both semantics.
\end{abstract}

\section{Introduction}

Many-valued logics, in which there are more than two truth values, have been defined and applied to several different scenarios. As examples, consider Kleene three-valued logic $K_3$ \cite{GlossarWiki:Kleene:1938}, which includes a third semantic value corresponding to {\em undefined}, and Bochvar's internal three-valued logic ($B_{3}^{I}$) \cite{3value}, also known as Kleene's weak three-valued logic \cite{klccne1952im}, where the intermediate truth value corresponds to {\em nonsense}. The third value of Bochvar's internal logic was reinterpreted by Beall \cite{AJL} as {\em off-topic}.

Within the context of programming languages, we can observe that executing a program is meaningful when values are within their expected semantic domains. This suggests that Bochvar's third value would match ill-typed programs. We argue that this formalism is particularly suited to logic programs. First, using the third value for run-time type errors allows one to distinguish a program that simply has no solution and a program that combines arguments in predicate calls haphazardly. Second, Bochvar's logic naturally propagates the errors. Based on this observation, we present a novel application of Bochvar's internal three-valued logic, to the definition of a new semantics for logic programming, where Bochvar's {\em nonsense} value stands for a type error at run-time. Inspired by Milner's motto “Well-typed programs can’t go wrong” \cite{DBLP:journals/jcss/Milner78}, our reinterpretation of Bochvar's third value corresponds to Milner's notion of {\em wrong}. 

Our three-valued semantics uses a set of disjoint semantic domains. Given that unification is the basic execution mechanism of logic programming, we define a run-time type error as an attempt to unify terms belonging to different semantic domains or applying functions or predicates to terms whose values are not on their domains.

In this view of logic programming, something may be false and still make sense (well-typed), while something may be true, and yet nonsensical (ill-typed).

\begin{example} \label{ex1}
Let $p$ be the predicate defined as follows:
\begin{verbatim}
p(X) :- X = 1, X = 2.
\end{verbatim}

If the constants 1 and 2 belong to the same semantic domain (intuitively, some set of integers) then the predicate should be considered well-typed, in the sense that it will not produce a run-time type error. Top-down execution in Prolog or a typed extension of Prolog just fails.
Now suppose that we are using the Herbrand interpretation where every semantic value belongs to a single domain: the set of all ground terms, also known as the Herbrand universe. In this case there is also no type error in the program and the result of executing it with any query would be \textit{false}. On the other hand, suppose that the domain of interpretation consists of singleton domains, each one containing exactly one different semantic value. In this case, 1 and 2 would belong to different semantic domains and the program would generate a type error at run-time for any query, since if $X \neq 1$ a type error would be reported on the first unification and if $X = 1$, a type error would be reported on the second unification.
\end{example}

\begin{example} \label{ex2}
Let $q$ be a predicate defined as follows:
\begin{verbatim}
q(X) :- X = 1, X = a.
\end{verbatim}

With this example we have the same situation that we had on the previous one, where the predicate does not generate run-time type errors if there is only one domain containing all ground terms and it would generate a type error at run-time if we have several singleton domains, each one containing one value. 
But note that, in practice, we would use a set of domains which is more informative, i.e. it is not one of the previously mentioned two extremes. In that case, one of these programs could generate run-time type errors while the other could not. 
If our semantic domains correspond to types as they are usually used in programming languages, then all integer numbers would belong to the same domain, as would all floating-point numbers, characters and atomic constants, each belong to their own domain. In this scenario, the program in example \ref{ex1}, although it fails, will not originate a type error at run-time. On the other hand, the program in example \ref{ex2} produces an error for any query, because, it will try to unify values belonging to different semantic domains, in this case an integer and an atomic constant.
\end{example}\\

This view of logic programming, where queries may succeed, but generate type errors at run-time, is compatible with the notion of {\em intended interpretation} defined in \cite{DBLP:books/mit/pfenning92/Naish92,BarbosaFloridoCosta17}.
A typical example of a Prolog predicate which may not follow the programmer's {\em intended interpretation} is the $append$ predicate:

\begin{verbatim}
append([],X,X).
append([X|L],Y,[X|L1]) :- append(L,Y,L1).
\end{verbatim}

This predicate succeeds for the query $append([~],1,1)$ without type checking, but this query does not makes sense if the type implicitly intended by the programmer for the predicate is the expected one, which is $list(\alpha) \times list(\alpha) \times list(\alpha)$, meaning three polymorphic lists.

\paragraph{}
The major novel contributions of our work are:
\begin{enumerate}
\item A new semantics for logic programming based on a three-valued logic which captures run-time type errors, independently of failure or success;
\item A new type system for logic programming, which relates programs with types, defining statically a well-typing relation;
\item A proof that our type system is sound with respect to the above-mentioned semantics and that the semantics of well-typed logic programs cannot be {\em wrong}. 
\end{enumerate}

Previous work on declarative semantics for logic programming is based on model theory, where some interpretation function either makes clauses true or false in a certain domain \cite{VanEmden:1976:SPL:321978.321991}. Here we extend this model theory approach by using a three-valued logic.
Using a specific semantic value for denoting erroneous programs goes back to early work on declarative debugging in logic programming \cite{Pereira:1986:RDL:645520.656148}. In these early works whether inadmissible atoms succeed or fail was not important. 
This idea was further formalized through making explicit use of the third semantic value, in a previously defined three-valued semantics for logic programming \cite{DBLP:journals/tplp/Naish06}. This previous semantics was based on a generalization of the $T_P$ operator for the {\em strong} Kleene logic, which captured inadmissibility with respect to a specification containing mode and type information. The main differences to our work is that we use the {\em weak} Kleene logic to denote the propagating effect of type errors, which enables us to use our semantics to establish the {\em semantic soundness} of a type system for logic programming, using the third semantic value as the interpretation of {\em ill-typed} atoms.
Previous semantics for typed logic programming, based on different domains of interpretation, were defined before using many-sorted logics \cite{DBLP:conf/slp/LakshmanR91,DBLP:books/daglib/0095081}. These semantics were defined for languages where type declarations formed an integral part of program syntax and were also used to determine their semantics. Our work differs from these approaches by defining separate semantics for untyped programs and types. Three-valued domains of interpretation revealed to be crucial in this separation of both semantics.

In several previous works types approximated the success set of a predicate \cite{DBLP:conf/iclp/Zobel87,DBLP:books/mit/pfenning92/DartZ92,DBLP:books/mit/pfenning92/YardeniFS92,DBLP:conf/iclp/BruynoogheJ88,DBLP:conf/lics/FruhwirthSVY91}. This sometimes led to overly broad and even useless types, because the way logic programs are written can be very general and accept more than what was initially intended. These approaches were different from ours in the sense that in our work types can filter the set of terms accepted by a predicate.
A different approach relied on ideas coming from functional programming languages \cite{DBLP:journals/ai/MycroftO84,DBLP:conf/slp/LakshmanR91,DBLP:books/daglib/0095081,DBLP:conf/iclp/SchrijversCWD08}. Other examples of the influence of functional languages on types for logic programming are the prescriptive type systems used in several functional logic programming languages \cite{DBLP:conf/birthday/Hanus13,DBLP:journals/jlp/SomogyiHC96}.
Along this line of research, a rather influential type system for logic programs was Mycroft and O'Keefe type system \cite{DBLP:journals/ai/MycroftO84}, which was later reconstructed by Lakshman and Reddy \cite{DBLP:conf/slp/LakshmanR91}. This system had types declared for the constants, function symbols and predicate symbols used in a program. The semantics of these systems and our semantics are quite different. Their semantics was itself typed, while our semantics uses independent semantics for programs and types. Another difference is the type language used in these previous works and in our work. In the Mycroft-O'Keefe type system, each clause of a predicate must have the same type. We lift this limitation extending the type language with sums of types (union types), where the type of a predicate is the sum of the types of its clauses.
Other relevant works on type systems and type inference in logic programming include types used in the logic programming systems CIAO Prolog \cite{10.1007/BFb0026840,Vaucheret:2002:MPY:647171.718317,DBLP:journals/tplp/HermenegildoBCLMMP12}, SWI and Yap \cite{DBLP:conf/iclp/SchrijversCWD08}. These systems were dedicated to the type inference problem and are not based on a declarative semantics with an explicit notion of type error.

\section{A Three-Valued Semantics for Logic Programming}

In this section we present a declarative semantics for logic programming based on the three-valued logic defined by Kleene \cite{klccne1952im} and further interpreted by Bochvar \cite{3value} and Beall \cite{AJL}. The third logic value was named \textit{nonsense} in Bochvar's work and interpreted as a meaningless statement that spreads its meaninglessness through every connective. This is the reason why whenever a connective joins \textit{nonsense} with any formula, the result is always \textit{nonsense}. We can notice that there is a similarity between this behaviour and the propagation of run-time errors in a programming language.
 Particularly in logic programming, the use of this third semantic value for run-time type errors allows one to distinguish a program that simply fails from a program that erroneously uses its function and predicate arguments.
This change in the semantics of logic programming from a two-valued semantics to a three-valued semantics captures the notion of type error and type-safeness and thus it will be the key in establishing the precise meaning of what is a {\em semantically sound type system} for logic programming.
We will consider that the semantics of our programs follows a three-valued logic, where the values are \emph{true}, \emph{false} and \emph{wrong}. The description of the connectives in this three-valued logic is described in table \ref{con}.

\label{threeValued}

\begin{table}[!htb]
    \begin{minipage}{.50\linewidth}
        \begin{center}
        \begin{tabular}{| c | c | c | c |}
            \hline
            $\bm{\wedge}$ & \textit{true} & \textit{false} & \emph{wrong} \\ \hline
            \textit{true} & \textit{true} & \textit{false} & \emph{wrong} \\ \hline
            \textit{false} & \textit{false} & \textit{false} & \emph{wrong} \\ \hline
            \emph{wrong} & \emph{wrong} & \emph{wrong} & \emph{wrong}  \\ \hline
        \end{tabular}
        \end{center}
    \end{minipage}%
    \begin{minipage}{.50\linewidth}
        \begin{center}
        \begin{tabular}{| c | c | c | c |}
            \hline
            $\bm{\vee}$ & \textit{true} & \textit{false} & \textit{wrong} \\ \hline
            \textit{true} & \textit{true} & \textit{true} & \textit{wrong} \\ \hline
            \textit{false} & \textit{true} & \textit{false} & \textit{wrong} \\ \hline
            \textit{wrong} & \textit{wrong} & \textit{wrong} & \textit{wrong}  \\ \hline
            \end{tabular}
            \end{center}
    \end{minipage}%
\caption{Connectives of the three-valued logic - conjunction and disjunction}
\label{con}
\end{table}

The negation of logic values is defined as: $\neg true = false$, $\neg false = true$ and $\neg \emph{wrong} = \emph{wrong}$. And implication is defined as: $p\implies q \equiv (\neg p) \vee q$.

To understand the meaning of logic programs, here we define their semantics. We define a declarative semantics, in the sense that it explains \textit{what} logic programs compute, but not \textit{how} logic programs compute. So it abstracts from the details of the computation and focuses on the logical meaning of predicates, when interpreted in the logic described above. As we shall see later, this will significantly simplify the task of defining a sound type system with respect to a semantic relation between programs and types.

{\bf Normalized Programs:} in order to simplify further processing we reduce every predicate to a {\em normal form} \cite{VanRoy:1991:LPE:128589}. In this representation, each predicate is defined by a single clause ($H :- B$), where the head $H$ contains distinct variables as arguments and the body $B$ is a disjunction (represented by the symbol $;$) of sequences of goals. We assume that there are no common variables between sequences of goals, except for the variables that occur in the head of the clause, without loss of generality. 

\begin{example}\label{ex3}
Let $add$ be a predicate defined by:
\begin{verbatim}
add(0,X,X).
add(s(X),Y,s(Z)) :- add(X,Y,Z).
\end{verbatim}
\noindent The normal form of this predicate is:
\begin{verbatim}
add(X1,X2,X3) :- ( X1 = 0, X2 = X, X3 = X ) ;
                 ( X1 = s(X'), X2 = Y, X3 = s(Z),
                 X4 = X', X5 = Y, X6 = Z, add(X4,X5,X6) ) .
\end{verbatim}
\end{example}

Note that it is always possible to normalize a program using program transformation \cite{VanRoy:1991:LPE:128589}. In the rest of this paper we will assume that predicate definitions are always in normal form.

{\bf Semantics:} we assume the existence of several distinct domains of interpretation, consisting of non-empty sets of semantic values. We include a singleton set $\mathbf{W}$ containing a value \textit{wrong} as the only element, corresponding to a type error at run-time. Errors are generated dynamically by trying to unify terms from different domains, or applying functions or predicates to terms whose values are not on their domains.  

Let \textbf{Val} be the set of semantic values of our term language. We will consider a disjoint union of domains:
\begin{center}
$\mathbf{Val} = \mathbf{B_1} + \dots + \mathbf{B_n} + \mathbf{A_1} + \dots + \mathbf{A_m} + \mathbf{F} + \mathbf{Bool} + \mathbf{W}$
\end{center}
Domains $\mathbf{B_i}$ are called {\em basic domains} and are the domains of constant symbols. Domains $\mathbf{A_i}$ are the {\em tree domains} of semantic values consisting of finite trees. The distinction between them is purely from their nature and not from their functionality or role in the semantics. \textbf{Bool} is the domain which contains values $true$ and $false$. $\mathbf{F}$ is the domain of all semantic functions, such that each semantic function maps a tuple of basic or tree domains, whose arity is the arity of the function, into a basic domain, a tree domain or \textbf{Bool}. 
A domain will be represented throughout the paper by $D$ or $D_i$, for some $i$.

Each ground term in our language will be associated with a semantic value, contained in a semantic domain, by a semantic interpretation function. Let \textbf{Var} be an infinite and enumerable set of variables, \textbf{Func} be an infinite and enumerable set of function symbols and \textbf{Pred} be an infinite and enumerable set of predicate symbols.

There is an interpretation function $I:\textbf{Func} \cap \textbf{Pred}\to \textbf{Val}$ which associates each constant to their semantic value and each function symbol and predicate symbol to a functional value $f$. Functions associated with predicate symbols always have the domain \textbf{Bool} as co-domain. Note that, because $I$ is a function, for a given $I$, each constant, function symbol and predicate symbol can only be associated with one semantic value. Therefore, since domains are disjoint, meaning $\forall i,j. i \neq j \implies D_i \cap D_j = \emptyset$, each semantic value belongs to a unique domain. From now on we will write $domain(v)$ to denote the domain which contains $v$, if $v$ is not a function, or to denote the tuple with the domains of the arguments of $v$, if $v$ is a function.

We shall now introduce the concept of a {\em state} which will associate variables with their current value. We shall represent a state as a function from variables to values. Each state $\sigma$ specifies a value, written $\sigma(X)$, for each variable $X$ of $\textbf{Var}$. 

Note that in clauses in normal form $q(X_1,\dots,X_n) :- sg_1;\dots;sg_m.$, the same variable symbols $X_1,\dots,X_n$ are used in the body $sg_1;\dots;sg_m$ but denote different possible values in the different sequences of goals $sg_1;\dots;sg_m$. Thus, to define its semantics, we will need a list of possible different states. Each state $\sigma_1,\dots,\sigma_m$ in the definition of the semantics of a clause $|[ q(X_1,\dots,X_n) :- sg_1;\dots;sg_m.|]_{I,[\sigma_1,\dots,\sigma_m]}$ will correspond to different variants of variables $X_1,\dots,X_n$, one for each query $sg_i$, for $1 \leq i \leq m$.

For simplicity of presentation, throughout the rest of this paper we will use $\sigma$ for a list with a single state $\sigma$, and $\bar{\sigma}$ for a list with several states, which can also appear explicitly $\bar{\sigma} = [\sigma_1,\dots,\sigma_n]$. The semantics of a logic term, given an interpretation $I$ and a list of states is defined as follows:\\ \\
$|[ X|]_{I,\sigma} = \sigma(X)$\\
$|[ k|]_{I,\sigma} = I(k)$\\
$|[ f(t_1,\dots,t_n)|]_{I,\sigma}$ = {\bf if} $(domain(|[ t_1|]_{I,\sigma}),\dots,domain(|[ t_n|]_{I,\sigma})) \subseteq domain(I(f))$ \\
\hspace*{4cm} \textbf{then} $I(f)(|[ t_1|]_{I,\sigma},\dots,|[ t_n|]_{I,\sigma})$ \\
\hspace*{4cm} \textbf{else} $wrong$ \\

\noindent
The semantics for a predicate $p$ with arity $n$ corresponds to a function $f_p$, given by I, that given values from semantic domains $D_1 \times \dots \times D_n$ outputs values in $Bool$, or in case the values do not belong to the domains of the function, gives as error, $wrong$.\\

\noindent 
$|[ p|]_{I,\sigma} = I(p) = f_p$, where $f_p :: D_1 \times \dots \times D_n \rightarrow Bool$.
\\ \\
Given this, the semantics for programs is given as follows: \\

\noindent
$|[ t_1 = t_2|]_{I,\sigma}$~=~{\bf if} ($|[ t_1|]_{I,\sigma} = |[ t_2|]_{I,\sigma} \wedge |[t_1|]_{I,\sigma} \neq wrong$)\\
\hspace*{3cm} \textbf{then} $true$\\
\hspace*{3cm} \textbf{else if} $(domain(|[ t_1|]_{I,\sigma}) = domain(|[ t_2|]_{I,\sigma}) \wedge |[t_1|]_{I,\sigma}\neq wrong$ ) \\
\hspace*{5cm} \textbf{then} $false$\\
\hspace*{5cm} \textbf{else} $wrong$

\noindent
$|[ p(t_1,\dots,t_n)|]_{I,\sigma}$~=~{\bf if} $(domain(|[ t_1|]_{I,\sigma}),\dots,domain(|[ t_n|]_{I,\sigma})) \subseteq domain(|[p|]_{I,\sigma})$ \\
\hspace*{4cm} \textbf{then} $|[ p|]_{I,\sigma}(|[ t_1|]_{I,\sigma},\dots,|[ t_n|]_{I,\sigma})$ \\
\hspace*{4cm} \textbf{else} $wrong$ 

\noindent
$|[ g_1 , \dots ,g_n|]_{I,\sigma}~=~|[ g_1|]_{I,\sigma} \wedge \dots \wedge |[ g_n|]_{I,\sigma}$

\noindent
$|[ sg_1 ; \dots ; sg_m|]_{I,[\sigma_1,\dots,\sigma_m]}~=~|[ sg_1|]_{I,\sigma_1} \vee \dots \vee |[ sg_m|]_{I,\sigma_m}$

\noindent
$|[ q(X_1,\dots,X_n) :- sg_1;\dots;sg_m.|]_{I,[\sigma_1,\dots,\sigma_m]}~=~(|[ sg_1;\dots;sg_m|]_{I,[\sigma_1,\dots,\sigma_m]} \implies$ \\ \hspace*{5cm}$(|[ q(X_1,\dots,X_n)|]_{I,[\sigma_1]} \wedge \dots \wedge |[q(X_1,\dots,X_n)|]_{I,[\sigma_m]}))$
\newline

Note that conjunction, disjunction and implication in the previous definitions are interpreted in the three-valued logic defined by the truth tables previously presented in this section. Also note that, as different states are only needed for disjunctions, in the previous rules, the number of states in the list of states is one, except for the last two cases. 

Next function, called $or\_degree$, gives the number of states needed for the semantics of disjunctions.

\begin{defn}[($or\_degree$)]
Let $M$ be a term, a goal or a clause. Its $or\_degree$ is defined as follows:
\begin{itemize}
\item $or\_degree(M)$ = k, if $M = sg_1;\dots;sg_k$ or $M = p(X_1,\dots,X_n) :- sg_1;\dots;sg_k.$
\item $or\_degree(M)$ = 1, otherwise.
\end{itemize}

\end{defn}

\section{Types}
Here we define a new class of expressions, which we shall call {\em types}, build from an infinite enumerable set of type variables, a finite set of base types, an infinite and enumerable set of type constants and an infinite and enumerable set of type function symbols. 
Simple types can be:
\begin{itemize}
\item a type variable ($\alpha, \beta, \gamma, \dots$)
\item a type constant ($1, [~], `c\text{'},\dots \in TCons$)
\item a base type ($int, float, \dots \in TBase$)
\item a sum of simple types ($\tau_1 + \dots + \tau_n$)
\item a recursive type definition, where $\tau$ is a simple type and $\alpha$ occurs in $\tau$ ($\mu\alpha .\tau$)
\item a type function symbol $f \in TFunc$ associated with an arity n applied to an n-tuple of simple types ($f(1,[~], g(\alpha))$).

\end{itemize}
A {\em predicate type} is a functional type from a tuple of simple types to the type $bool$ ($\tau_1 \times \dots \times \tau_n \rightarrow bool$).
For recursive types we use a recursive operator $\mu$.
 For example, the traditional type for lists of integers can be written as: $\mu \alpha. ([~] + [int | \alpha])$.

Our type language enables parametric polymorphism through the use of type schemes. A {\em type scheme} is defined as $\forall \alpha. \tau$, where $\tau$ is a predicate type and $\alpha$ is a type variable.
In logic programming, there have been several authors that have dealt with polymorphism with type schemes or in a similar way \cite{DBLP:conf/slp/PyoR89,DBLP:journals/scp/BarbutiG92,Henglein:1993:TIP:169701.169692,DBLP:conf/iclp/Zobel87,DBLP:conf/lics/FruhwirthSVY91,DBLP:conf/iclp/GallagherW94,DBLP:books/mit/pfenning92/YardeniFS92,TAT}. Type schemes have type variables as generic place-holders for ground types. Parametric polymorphism comes form the fact these type variables (the type parameters) can be instantiated with any type. In the rest of the paper, for the sake of readability, we will not write the universal quantifiers on type schemes and we will assume that all free type variables on predicate types are universally quantified.

\subsection{Semantic Typing}

Here we need to define what is meant by a value $v$ semantically having a type $\tau$. We begin with a formal definition of a semantics for types. The notation we will use for a type constant is $\kappa$ and for a base type is $bs$. A simple type is {\em ground} if it contains no type variables and it is {\em complex} if it starts with a type function symbol $f$.

We assume that each base type $bs$ is associated with a basic domain. Therefore, not only there are exactly as many base types as there are basic domains, meaning that there is a finite number of base types, but we also know the association between them. Let $\sim$ denote the association between types and domains. If a base type $bs$ is associated with a basic domain $B_i$, we will denote this by $bs \sim B_i$. The association between base types and basic domains is considered to be predefined.

We will write $fix(F)$ meaning the least fixed point of function $F$, defined as $fix(F) = \bigcup_n F^n(\emptyset)$, where $F^i$ is the $i$-fold composition of $F$ with itself. Note that $fix(F)$ is well-defined, because the set of sets of values ordered by set inclusion is a complete partial order (CPO), and then, by Kleene fixed-point theorem, every function on this set has a least fixed point, which is the supremum of the ascending Kleene chain of the function. 

The relation $\sim$ can now be extended to relate types with domains in the following way:

\begin{itemize}
\item $bs \sim D$ is predefined.
\item $\tau_1 + \tau_2 \sim D_1 \cup D_2 \iff \tau_1 \sim D_1 \wedge \tau_2 \sim D_2$.
\item $\tau_1 \times \dots \times \tau_n \sim D_1 \times \dots \times D_n \iff \tau_1 \sim D_1 \wedge \dots \wedge \tau_n \sim D_n$.
\item $\alpha \sim D$, for any \emph{basic} or \emph{tree domain} $D$.
\item $\mu \alpha . \tau \sim D \iff F(x) = \mathbf{T}|[\tau [\nicefrac{x}{\alpha}] |]_I \wedge D = fix(F)$.
\item $\tau_1 \times \dots \times \tau_n \rightarrow \tau_{n+1} \sim D_1 \times \dots \times D_n \rightarrow D_{n+1} \iff \forall 1 \leq i \leq n+1.~\tau_i \sim D_i$
\end{itemize}

Note that the $\sim$ relation associates base types with basic domains and it is then lifted from this association of base types with basic domains. Thus, there is no way a type may be related by $\sim$ to the domain ${\bf W}$, containing the value {\em wrong}. 

Given the formerly described relation $\sim$, the semantics for types is given by the following rules. $\mathbf{T}|[~|]$ defines the semantics of types of terms, with the exception of $bool$, which is the type of the output of a predicate semantics. $\mathbf{P}|[~|]$ defines the semantics of types of predicates.\\ 

\noindent
$\mathbf{T|[ }\kappa \mathbf{|]}_I = \{ \kappa \}$\\
$\mathbf{T|[ }\alpha \mathbf{|]}_I = \{ v~|~v \in D \wedge \alpha \sim D \}$\\
$\mathbf{T|[ }bs \mathbf{|]}_I = \{v~|~v \in B_i \wedge bs \sim B_i \}$\\
$\mathbf{T|[ }bool \mathbf{|]}_I = \{ true ,~false \}$\\
$\mathbf{T|[ }f(\tau_1,\dots,\tau_n)\mathbf{|]}_I = \{f(v_1,\dots,v_n)~|~v_1 \in \mathbf{T|[ }\tau_1\mathbf{|]}_I \wedge \dots \wedge v_n \in \mathbf{T|[ }\tau_n\mathbf{|]}_I\}$\\
$\mathbf{T|[ } \tau_1 + \dots + \tau_n \mathbf{|]}_I = \mathbf{T|[ } \tau_1 \mathbf{|]}_I \cup \dots \cup \mathbf{T|[ } \tau_n \mathbf{|]}_I$.\\
$\mathbf{T|[ }\tau_1 \times \dots \times \tau_n \mathbf{|]}_I = \{ (v_1,\dots,v_n)~|~v_1 \in \mathbf{T}|[ \tau_1|]_I \wedge \dots \wedge v_n \in \mathbf{T}|[ \tau_n|]_I \}$.\\
$\mathbf{T|[ } \mu \alpha.\tau \mathbf{|]}_I = \{v ~|~v \in D \wedge \mu \alpha. \tau \sim D\}$

\noindent
$\mathbf{P|[ }\tau_1 \times \dots \times \tau_n \rightarrow bool \mathbf{|]}_I =$ \\ \hspace*{1cm} $\{p~|~\forall (t_1,\dots,t_n). |[ (t_1, \dots, t_n)|]_{I,\sigma} \in T|[ \tau_1 \times \dots \times \tau_n|]_I \implies |[ p|]_{I,A}(|[ t_1|]_{I,\sigma},\dots,|[ t_n|]_{I,\sigma}) \in \mathbf{T}|[bool|]_I \}$ \\
$\mathbf{P|[ }\forall \alpha. \tau \mathbf{|]}_I = \bigcap_{\tau\prime}. \mathbf{P}|[ \tau [\nicefrac{\tau\prime}{\alpha}|]_I$, where $\tau\prime$ is any ground simple type.

\begin{example}
Let us represent the type list of integers as $\mu \alpha. ([~] + [int~|~\alpha])$. Its semantics is calculated in the following way:\\ \\
$\mathbf{T}|[\mu \alpha. ([~] + [int~|~\alpha])|]_I ~=~ \{v~|~v \in D \wedge \mu \alpha. ([~] + [int~|~\alpha]) \sim D\}$\\ \\
Let $F(X)~=~\mathbf{T}|[[~] + [int~|~X])|]_I~=~\{[~] \} \cup \{ [n~|~v]~|~n \in D_1 \wedge v \in X \}$, where $int \sim D_1$. \\ \\
$F(\emptyset)~=~\{ [~] \} \cup \{[n~|~v]~|~n \in D_1 \wedge v \in \emptyset \}~=~\{ [~] \}$. \\
$F^2(\emptyset)~=~F(F(\emptyset))~=~F(\{[~]\})~=~\{[~]\} \cup \{[n]~|~n \in D_1 \}$\\
$F^3(\emptyset)~=~F(\{[~]\} \cup \{[n]~|~n \in D_1 \})~=~\{[~]\} \cup \{[n]~|~n \in D_1 \} \cup \{[n_1,n_2]~|~n_1 \in D_1 \wedge n_2 \in D_1\}$\\ \vdots \\
Now, note that $D~=~fix(F)$, where $fix(F)~=~\bigcup_n F^n(\emptyset)$.\\
Therefore, $D~=~F(\emptyset) \cup F^2(\emptyset) \cup F^3(\emptyset) \cup \dots~=~\{[~]\} \cup \{[n]~|~n \in D_1 \} \cup \{[n_1,n_2]~|~n_1 \in D_1 \wedge n_2 \in D_1\} \cup \dots $, where $D_1 \sim int$, which is the set of all finite lists of integers.

\end{example}

\begin{example}
Now consider the predicate type for a predicate defining polymorphic lists: $\forall \beta. \Big[ \mu \alpha.([~] + [\beta~|~\alpha]) \rightarrow bool \Big]$. The semantics of this type is the set of predicates which define lists of elements of type $\tau$, for every ground instance $\tau$ of $\beta$. Therefore its semantics contains all predicates whose semantic functions have polymorphic lists as the argument.
\end{example}

We shall now define what is meant by a value $v$ semantically having a type $\tau$. Note that values may have many types, or have no type at all. For example, the value {\em wrong} has no type.

Using {\bf T} we can define, for each state $\sigma$, a relation between the value associated with a variable in $\sigma$ and a type $\tau$, by:

\begin{defn}\label{JFS}
Let $X$ be a variable and $\tau$ a type:\\
$X:_{I,\sigma}\tau \iff |[ X|]_{I,\sigma} \in \mathbf{T|[ }\tau\mathbf{|]}_{I}$
\end{defn}\\

An {\em assumption} is a type declaration for a variable, written $X:\tau$, where $X$ is a variable and $\tau$ a type. Variable $X$ is called the {\em subject} of the assumption.
We define a {\em context} $\Gamma$ as a set of assumptions with distinct variables as subjects (alternatively {\em contexts} can be defined as functions from variables to types). We can extend the above relation to contexts.

\begin{defn}\label{ContextSemantics}
Given a context $\Gamma$,
$|[ \Gamma|]_{I,\sigma} \iff \forall (X:\tau)\in \Gamma.~ X:_{I,\sigma}\tau$
\end{defn}

\begin{defn}\label{OPLUS}
 The {\em sum of contexts}, written $\Gamma_1 \oplus\Gamma_2$, is defined by: \\ 
$(\Gamma_1 \oplus \Gamma_2)(X)=
\begin{cases}
\Gamma_1(X), &   X \text{ does not occur as subject in } \Gamma_2\\
\Gamma_2(X), &   X \text{ does not occur as subject in } \Gamma_1\\
\Gamma_1(X) + \Gamma_2(X), & \text{otherwise.}
\end{cases}$
\end{defn}\\

Finally we give a semantic meaning to assertions of the
form $\Gamma \models_{P,I} M:\tau$ stating that if the assumptions in $\Gamma$ hold then $M$ yields a value of type $\tau$.

\begin{defn}[(Semantic Typing)]\label{STR}
Let $P$ be a program, i.e. a set of clauses, $I$ an interpretation, and $M$ either a term, a goal or a clause.\\
$\Gamma \models_{P,I} M:\tau \iff \exists [\Gamma_1,\dots,\Gamma_n]. \forall \bar{\sigma} = [\sigma_1,\dots,\sigma_n]. \Big[ |[ \Gamma_1|]_{I,[\sigma_1]} \wedge \dots \wedge |[ \Gamma_n|]_{I,[\sigma_n]} \implies \mathbf{|[ }M|]_{I,\bar{\sigma}} \in \mathbf{T}|[\tau|]_I \Big]$,\\
where $\Gamma_1 \oplus \dots \oplus \Gamma_n = \Gamma$ and $n = or\_degree(M)$. If $n = 1$, then the only sum possible is $\Gamma$ itself.
\end{defn}

\begin{example}
Let $p$ be a predicate with the following predicate definition:
\begin{verbatim}
p(X) :- X = 1 ; X = a.
\end{verbatim}
Let interpretation $I$ be such that $I(\texttt{1}) = 1$ and $I(\texttt{a}) = a$ and $B_1$ and $B_2$ be two semantic domains such that $1 \in B_1$ and $a \in B_2$. Let $I(\texttt{p}) = f_p$, such that $f_p :: B_1 \cup B_2 \to Bool$.
Lets assume we have $\Gamma = \{ X : int + atom \}$, $\Gamma_1 = \{ X : int\}$ and $\Gamma_2 = \{ X : atom \}$, where $int \sim B_1$ and $atom \sim B_2$. We will show that $\Gamma \models_{P,I} (p(X) :- X = 1 ; X = a.) : bool$. This corresponds to showing that $$\exists [\Gamma_1,\Gamma_2] .\forall [\sigma_1,\sigma_2]. \Big[ |[\Gamma_1|]_{I,[\sigma_1]} \wedge |[\Gamma_2|]_{I,[\sigma_2]}  \implies \mathbf{|[ }p(X) :- X = 1 ; X = a.|]_{I,\bar{\sigma}} \in \mathbf{T}|[bool|]_I \Big]$$
Suppose $\Gamma_1 = \{X : int\}$ and $\Gamma_2 = \{X:atom\}$, then $\Gamma_1 \oplus \Gamma_2 = \Gamma$. If $\sigma_1(X) \in B_1$ and $\sigma_2(X) \in B_2$, the left hand side of the implication is true. The right hand side is also true, since applying $f_p$ to any of the $\sigma_i(X)$ does not return \emph{wrong} and neither does any of the unifications on the bodies of the clause. Therefore the semantic value of the clause is not \emph{wrong}. 
If one of the $\sigma_i(X)$ does not yield a value in the previous domains, the right hand side of the implication is false, since one of the unifications yields \emph{wrong}. But the left hand side is also false, since $|[X|]_{I,\sigma_1} \notin \mathbf{T}|[int|]_I$ and $|[X|]_{I,\sigma_2} \notin \mathbf{T}|[atom|]_I$, thus the initial statement is still true.
\end{example}

\section{Type System}

In this section we define a type system which, statically, relates logic programs with types. 
The type system defines a relation $\Gamma \vdash_{P} p:\tau$, where $\Gamma$ is a context as defined in the previous section, $p$ is a term, a goal, or a clause, and $\tau$ is a type. This relation should be read as expression $p$ has type $\tau$ given the context $\Gamma$, in program $P$. We will write $\Gamma \cup \{X:\tau\}$ to represent the context that contains all assumptions in $\Gamma$ and the additional assumption $X:\tau$ (note that because each variable is unique as a subject of an assumption in a context, in $\Gamma \cup \{X:\tau\}$, $\Gamma$ does not contain assumptions with $X$ as subject). We will write a sequence of variables $X_1,\dots, X_n$ as $\bar{X}$, and a sequence of types as $\bar{\tau}$.
We assume that clauses are normalized and, therefore, every call to a predicate in the body of a clause contains only variables.

The type system also uses a function $type$ which gives the type of constants and function symbols. We assume that $type$ is defined for all constants and function symbols that occur in $P$ and types given by $type$ never include $bool$ and always have the right arity, i.e. for a function symbol of arity $n$ it will be of the form $\tau_1 \times \dots \times \tau_n \rightarrow \tau\prime$. If $n=0$, it is just $\tau$.
This function corresponds to a type declaration. For instance, the list type $list(\alpha) = [~] + [\alpha~|~list(\alpha)]$ is denoted by $\mu \beta. [~] + [\alpha~|~\beta]$ in the type system. This is defined in the type system by a function $type$ that assigns the type $\mu \beta. [~] + [\alpha~|~\beta]$ to the empty type $[~]$ and the type $\alpha \times (\mu \beta. [~] + [\alpha~|~\beta]) \to (\mu \beta. [~] + [\alpha~|~\beta])$ to the type constructor symbol $[~|~]$.
Note that these declarations for function symbols and constants may be used, however, they are optional and, moreover, there are no type declarations for predicates.

We first define the following subtyping relation, which will be used in the type system.

\begin{defn}[(Subtyping)]\label{Subtyping}
Let $\phi$ be a substitution of types for type variables. Let $\leq$ denote the subtyping relation defined as follows:
\begin{itemize}
\item $\tau \leq \tau$ (Reflexivity)

\item $\tau \leq \tau\prime$ if $\exists \phi. \phi(\tau\prime) = \tau$ (Instance)

\item if $\tau \leq \tau\prime$, then $\tau \leq \tau\prime + \tau\prime\prime$ (Right Union)

\item if $\tau \leq \tau\prime$, then $\tau \leq \tau\prime\prime + \tau\prime$ (Left Union)

\item if $\tau\prime \leq \tau$, then $\tau \rightarrow bool \leq \tau\prime \rightarrow bool$ (Contravariance)
\end{itemize}

\end{defn}

\begin{lemma}[(Soundness of the subtyping relation)]
Given a predicate $p$ and a type $\tau$, if $p \in \mathbf{P}|[ \tau|]_{I}$, then $\forall \tau\prime. \tau \leq \tau\prime \implies p \in \mathbf{P}|[ \tau\prime|]_{I}$.
\end{lemma}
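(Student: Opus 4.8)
The plan is to proceed by structural induction on the derivation of the subtyping judgement $\tau \leq \tau'$, showing in each case that $p \in \mathbf{P}|[ \tau|]_{I}$ forces $p \in \mathbf{P}|[ \tau'|]_{I}$. Since $p$ is a predicate, both $\tau$ and $\tau'$ are predicate types, so I only need to examine the last rule applied and the shape it imposes on $\tau'$. The base case \emph{Reflexivity} is immediate because $\tau' = \tau$. For the two union rules, where $\tau' = \tau'' + \tau'''$ is a sum of predicate types, I would first fix the reading of the semantics of a sum of predicate types as the union $\mathbf{P}|[ \tau''|]_{I} \cup \mathbf{P}|[ \tau'''|]_{I}$ (mirroring the $\mathbf{T}$-rule for sums); then the induction hypothesis gives $\mathbf{P}|[ \tau|]_{I} \subseteq \mathbf{P}|[ \tau''|]_{I}$ (say), and membership in one summand yields membership in the union, closing the case.

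The central structural case is \emph{Contravariance}: here $\tau = \sigma \rightarrow bool$, $\tau' = \sigma' \rightarrow bool$, and the premise is $\sigma' \leq \sigma$ on the \emph{argument} types. I would reduce this to a companion soundness statement for term types, namely that $\sigma' \leq \sigma$ implies $\mathbf{T}|[ \sigma'|]_{I} \subseteq \mathbf{T}|[ \sigma|]_{I}$. Granting that inclusion, the argument is a direct unfolding of the definition of $\mathbf{P}$: a predicate $p \in \mathbf{P}|[ \sigma \rightarrow bool|]_{I}$ maps every value of $\mathbf{T}|[ \sigma|]_{I}$ into $\mathbf{T}|[ bool|]_{I}$, and since $\mathbf{T}|[ \sigma'|]_{I}$ is contained in $\mathbf{T}|[ \sigma|]_{I}$ it a fortiori maps every value of $\mathbf{T}|[ \sigma'|]_{I}$ into $\mathbf{T}|[ bool|]_{I}$, i.e. $p \in \mathbf{P}|[ \sigma' \rightarrow bool|]_{I}$. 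Because this case consumes a subtyping derivation on term types rather than on predicate types, the cleanest organisation is to prove the term-type inclusion and this predicate lemma together by a simultaneous induction on subtyping derivations; the term-type inclusion itself should be routine, since \emph{Instance} corresponds to $\mathbf{T}|[ \sigma'|]_{I} \subseteq \mathbf{T}|[ \alpha|]_{I}$ because a type variable denotes the union of all basic and tree domains, and the union rules are again immediate.

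I expect the \emph{Instance} case to be the main obstacle. Here $\tau = \phi(\tau')$ is a substitution instance of the more general predicate type $\tau'$, whose free type variables are implicitly universally quantified, so that $\mathbf{P}|[ \tau'|]_{I} = \bigcap_{\rho} \mathbf{P}|[ \tau'[\rho/\alpha]|]_{I}$ ranges over all ground instances $\rho$. The difficulty is precisely that this intersection and the single instance $\mathbf{P}|[ \phi(\tau')|]_{I}$ sit on opposite sides of the natural inclusion, so one cannot simply read off $p \in \mathbf{P}|[ \tau|]_{I} \Rightarrow p \in \mathbf{P}|[ \tau'|]_{I}$ from membership in one instance. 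To push the case through I would restrict attention to the ways \emph{Instance} can legitimately relate two \emph{predicate} types in a derivation ending at a predicate judgement, argue that the relevant $\phi$ acts as a renaming of the universally quantified variables so that the two intersections coincide, and otherwise route any genuine specialisation of argument types through the contravariance case rather than through a direct instantiation of the whole predicate type. Pinning down exactly which instantiations of a full predicate type are sound, and reconciling them with the intersection semantics of polymorphic predicate types, is the step on which I would spend the most care.
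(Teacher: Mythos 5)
The paper gives no real proof here: the appendix disposes of the lemma in one sentence (``follows directly from the definition of $\leq$ and of $\mathbf{P}|[~|]$''), so your case analysis by induction on the subtyping derivation is not a different route so much as the elaboration the paper omits. Your handling of Reflexivity is right, and your treatment of Contravariance --- reduce to the companion inclusion $\sigma' \leq \sigma \implies \mathbf{T}|[\sigma'|]_I \subseteq \mathbf{T}|[\sigma|]_I$ on simple types and then unfold the definition of $\mathbf{P}|[\sigma \rightarrow bool|]_I$ --- is exactly the argument that makes the only actual use of this lemma (the CLL case of the soundness theorem) go through: there one has $\tau_i \leq \tau_i'$ on argument types, lifts to $\bar{\tau}' \rightarrow bool \leq \bar{\tau} \rightarrow bool$ by Contravariance, and applies this lemma. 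The simultaneous induction you propose is the clean way to organise it.

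You are also right that the Instance case is where the statement creaks, and it is worth being blunter than you are: if the Instance rule may be applied to a whole predicate type, the lemma is simply false under the paper's intersection semantics for implicitly quantified predicate types. Take $\tau = int \rightarrow bool$ and $\tau' = \alpha \rightarrow bool$; then $\phi(\tau') = \tau$ for $\phi = [\nicefrac{int}{\alpha}]$, so $\tau \leq \tau'$ by Instance, yet $\mathbf{P}|[\alpha \rightarrow bool|]_I = \bigcap_{\rho}\mathbf{P}|[\rho \rightarrow bool|]_I$ demands totality on every ground type, which a predicate defined only on integers need not satisfy --- the inclusion runs in the opposite direction to what the lemma claims. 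Similarly, the two Union rules applied at the predicate level produce a sum of predicate types, which is not in the type grammar and has no $\mathbf{P}$-semantics; your proposal to read it as a union of $\mathbf{P}$-denotations is an invention, not something the paper licenses. The only tenable reading, which you gesture at but should state as a restriction rather than leave as an open worry, is that subtyping between predicate types is generated solely by Reflexivity and Contravariance, with Instance and the Union rules confined to the simple types occurring in argument position (where $\mathbf{T}|[\phi(\sigma)|]_I \subseteq \mathbf{T}|[\sigma|]_I$ does hold, since $\mathbf{T}|[\alpha|]_I$ is the union of all basic and tree domains). Under that restriction your proof closes; without it, no proof can, and the paper's one-line appeal to the definitions hides exactly this point.
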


\begin{figure}[!ht]
\[
\begin{array}{ll}
VAR & \Gamma \cup \{X:\tau\} \vdash_{P} X:\tau \\ \\ \\
CST & \Gamma \vdash_{P} c:\tau, \text{ where } type(c) = \tau\\ \\ \\
CPL & \frac{{ \textstyle \Gamma \vdash_{P} t_1: \tau_1 ~~\dots ~~ \Gamma \vdash_{P} t_n:\tau_n}}{{ \textstyle \Gamma \vdash_{P} f(t_1,\dots,t_n) : \tau }}, \text{ where } type(f) = \tau_1 \times \dots \times \tau_n \rightarrow \tau \\ \\ \\
UNF & \frac{{\textstyle \Gamma \vdash_{P} t_1:\tau ~~~  	\Gamma \vdash_{P} t_2:\tau}}{{\textstyle \Gamma \vdash_{P} t_1=t_2 : bool}}   \\ \\ \\
CLL & \frac{{\textstyle \Gamma \cup \{Y_1 :\tau_1\prime, \dots , Y_n : \tau_n\prime\} \vdash_{P} (p(Y_1, \dots ,Y_n):-sg.) : bool ~~~~ \forall i. \tau_i \leq \tau_i\prime}}{{\textstyle  \Gamma \cup \{X_1:\tau_1, \dots, X_n:\tau_n\} \vdash_{P} p(X_1,\dots ,X_n):bool}} \\ \\ \\
CON & \frac{{\textstyle \Gamma \vdash_{P} g_1:bool ~~  \Gamma \vdash_{P} g_n:bool}}{{\textstyle \Gamma \vdash_{P} g_1,\dots,g_n :bool}}   \\ \\ \\
CLS^{(a)} & \frac{{\textstyle \Gamma \cup \{\bar{X}:\bar{\tau_1}\} \vdash_{P} b_1:bool ~~ \dots ~~ \Gamma \cup \{\bar{X}:\bar{\tau_m}\} \vdash_{P} b_m:bool }}{{\textstyle \Gamma \cup \{\bar{X}:\bar{\tau_1} + \dots + \bar{\tau_m} \} \vdash_{P} (p(\bar{X}):- b_1;\dots;b_m.) : bool }} \\ \\ \\
RCLS^{(b)} & \frac{\textstyle \Gamma \cup \{\bar{X}:\bar{\tau_i}\} \vdash_{P} b_i:bool ~~ \dots ~~  \Gamma \cup \{\bar{X}:\bar{\tau_j}, \bar{Y}_{j1}:\bar{\tau_j}, \dots, \bar{Y}_{jk_j}:\bar{\tau_j}\} \vdash_{P} b_{m+j}:bool}{\textstyle \Gamma \cup \{ \bar{X}: \bar{\tau_i} + \bar{\tau_j}\} \vdash_{P} (p(\bar{X}):- b_1;\dots;b_m; \hspace*{3cm}} \\ & \hspace*{5.1cm} { \small b_{m+1},p(\bar{Y}_{11}), \dots, p(\bar{Y}_{1k_1});} \\ & \hspace*{7cm} {\small \vdots} \\ & \hspace*{5.1cm} {b_{m+n},p(\bar{Y}_{n1}), \dots, p(\bar{Y}_{nk_n}).) : bool} \\ \\ 
\end{array}
\]
\scriptsize{(a) This rule is for non-recursive predicates only. The sum on the consequence of the rule is argument-wise}
\scriptsize{(b) This rule is for recursive predicates. Note that all variables in recursive calls in a certain sequence of goals have the same type as the variables in the head in that sequence of goals. Also $i = 1,\dots,m$ and $j=1,\dots,n$}
\caption{Type System}
\label{TSystem}
\end{figure}

Then we present a type system (see Figure \ref{TSystem}) defining the typing relation, which relates terms, calls and predicate definitions with types. If there is a context $\Gamma$ and a type $\tau$ such that $\Gamma \vdash_{P} p:\tau$ we say that $p$ is {\em statically well-typed}. This type system can be easily implemented to type check programs, but not to infer types. Nevertheless, we want to keep open the possibility of defining a type inference algorithm.

\begin{defn}[(Monomorphism Restriction)] \label{MR}
Let $p$ be a recursive predicate of arity $n$, typed with type $\tau$ using an context $\Gamma$. For all $1 \leq i \leq n$, the types of the variables in the $i$-th argument of $p$ in the head of the clause defining it, and in all of its recursive calls are the same in $\Gamma$.
\end{defn}\\

It is well-known that type inference in the presence of polymorphic recursion is not decidable \cite{Henglein:1993:TIP:169701.169692,Kfoury:1993:TRP:169701.169687}, thus we do not allow polymorphic recursion in the system. This is achieved by the previous restriction on recursive predicates.
We choose to define this restriction locally in each predicate definition for the sake of simplicity of presentation. The alternative would be to define a new syntax for logic programming to group together mutually recursive predicates as a single syntactic entity (in functional programming this would correspond to nested {\em letrec} expressions). The {\em monomorphism restriction} (Definition \ref{MR}) holds in our type system by rule RCLS for typing recursive predicates. In this rule we use the same type for the variables in the head of the clause in its recursive calls in the body.

Let us briefly describe the other type rules. Rule Var types a variable with the type is has in the context.

Rule CST and CPL types constants and complex terms using the $type$ function either directly in the case of constants, or checking the types for the arguments in the case of a complex term, which have to be the same ones as the input of $type(f)$, for a complex term $f(t_1,\dots,t_n)$.

Rule UNF types an equality as $bool$ if the types for both sides of the equality are the same.
Rule CLL types predicate calls: for a call to a predicate $p$ to be well typed, the type for each variable in the call needs to be a subtype of the type of the variables in the definition of $p$ in program $P$.

Rule CON just check that every goal is $bool$.
Rule CLS types non-recursive clauses: if we type each body of a clause using types $\bar{\tau}$ for variables $\bar{X}$, then we can type the entire clause with the sum of all those types argument-wise.
Note that, from rules CLS and RCLS, the type of a clause is $bool$. However, the interesting type information is the type for a predicate, determined by the types of its arguments which, in the end of the type derivation, are in context $\Gamma$.

\begin{example}
Suppose that the programmer wants to use a list data structure. Then it can declare it in the program as such:\\ \\
$:- type$ $list(\alpha)=[~] + [\alpha~|~list(\alpha)].$\\ \\
This declaration is translated to the following in the type system: $\mu \beta. ([~] + [\alpha~|~\beta])$. Then, every constant and constructor that are members of the list of possible cases for the type and are assigned their corresponding types. The function $type$ in the type system will be:\\ \\
$type([~]) = \mu \beta.( [~] + [\alpha~|~\beta])$,\\
$type([~|~]) = \alpha \times \mu \beta. ([~] + [\alpha~|~\beta]) \to \mu (\beta. [~] + [\alpha~|~\beta])$.\\
\\
With this $type$ function, we can produce a derivation that assigns the type $list(\alpha) \times list(\alpha) \times list(\alpha) \to bool$ to the predicate $append$ defined as usual.\\
In this case, the call $append([~],1,1)$ would be ill-typed. 

Note that with another function $type$, originated from another declaration from the programmer, the same call could be well-typed. One example would be the following declaration:\\ \\
$:- type$ $dummy(\alpha)= 1 + [~] + [\alpha~|~dummy(\alpha)]$.
\end{example}

\begin{example}
Here we give a simple example of a type derivation.
Let $p$ be a predicate defined by {\em p(X) :- X = 1; X = a.}. Let $type$ be such that $type(1) = int$ and $type(a) = atom$.
By two applications of rule UNF followed by an application of rule CLS we have: \\ \\
\noindent
\begin{math}
\begin{array}{c}
\frac{{\textstyle \{X: int\} \vdash_{P} X: int ~~~~ \{X: int\} \vdash_{P} 1: int }}{{\textstyle \{X: int\} \vdash_{P} X = 1: bool }}
\end{array}
\end{math}
\\ \\

\noindent
\begin{math}
\begin{array}{c}
\frac{{\textstyle \{X: atom\} \vdash_{P} X : atom ~~~~ \{X: atom\} \vdash_{P} a: atom}}{{\textstyle \{X: atom\} \vdash_{P} X = a: bool }}
\end{array}
\end{math}
\\ \\

\noindent
\begin{math}
\begin{array}{c}
\frac{{\textstyle \{X: int\} \vdash_{P} X=1: bool ~~~~ \{X: atom\} \vdash_{P} X = a: bool }}{{\textstyle \{X: int + atom\} \vdash_{P} p(X) :- X = 1; X = a. : bool }}
\end{array}
\end{math}
\\ \\
From the type of $X$ in the final context, the type of $p$ is $int + atom \to bool$.
\end{example}

We will also give an example for a predicate with a type declaration.

\begin{lemma}[(Interpretation Existence)]\label{IE}
Given a function \emph{type}, there is always an interpretation $I$, such that for any constant or function symbol, here denoted by symbol $t$, and any state $\sigma$, $|[t|]_{I,\sigma} \in \mathbf{T}|[type(t)|]_I$.
\end{lemma}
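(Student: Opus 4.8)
The plan is to exhibit $I$ as a canonical term-model interpretation determined by $type$, and then verify the required membership by cases on the shape of $type(t)$. Since the semantics of a constant and of a function symbol does not depend on the state ($|[c|]_{I,\sigma} = I(c)$, and the value of a symbol is $I(f)$), the quantification over $\sigma$ is vacuous and I only need to fix $I$.

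First I would fix the domains. The basic domains $\mathbf{B_i}$ and their predefined associations $bs \sim B_i$ are given, so I would place in each $\mathbf{B_i}$ exactly the constants $c$ for which $type(c)$ is the base type associated with $B_i$, and I would take the tree domains $\mathbf{A_j}$ to be the sets of finite trees generated by the function symbols, stratified so that each $\mathbf{A_j}$ coincides with the fixed point $fix(F)$ demanded by the $\sim$ relation for the recursive types occurring as results of $type$. I would then define $I$ on symbols in the only natural way: for a constant $c$ set $I(c) = \kappa$ when $type(c) = \kappa$ is a type constant, and otherwise let $I(c)$ be the designated element of the $\mathbf{B_i}$ with $type(c) \sim B_i$; for a function symbol $f$ with $type(f) = \tau_1 \times \dots \times \tau_n \rightarrow \tau'$ let $I(f)$ be the tree constructor $I(f)(v_1,\dots,v_n) = f(v_1,\dots,v_n)$, with $domain(I(f))$ the tuple of domains associated by $\sim$ to $\tau_1,\dots,\tau_n$.

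With $I$ in place, the constant case is immediate: by $|[c|]_{I,\sigma} = I(c)$ together with the clauses $\mathbf{T}|[\kappa|]_I = \{\kappa\}$ and $\mathbf{T}|[bs|]_I = \{v \mid v \in B_i \wedge bs \sim B_i\}$, the value $I(c)$ lies in $\mathbf{T}|[type(c)|]_I$ by construction. For a function symbol $f$ the claim reduces to showing that $I(f)$ sends arguments of the declared input types to a result of type $\tau'$, that is, that $v_i \in \mathbf{T}|[\tau_i|]_I$ for all $i$ implies $f(v_1,\dots,v_n) \in \mathbf{T}|[\tau'|]_I$. I would prove this by analysing the shape of $\tau'$: when $\tau'$ is a complex type $g(\dots)$ the claim is exactly the defining clause for $\mathbf{T}|[g(\dots)|]_I$; when $\tau'$ is a sum it reduces to a single summand; and when $\tau'$ is a recursive type $\mu\alpha.\sigma$ the defining clause gives $\mathbf{T}|[\mu\alpha.\sigma|]_I = fix(F)$ with $F(x) = \mathbf{T}|[\sigma[\nicefrac{x}{\alpha}]|]_I$, so closure under the constructor follows from the fixed-point equation $fix(F) = F(fix(F))$, which holds by the Kleene fixed-point theorem already invoked when $\sim$ was defined.

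The step I expect to be the main obstacle is the interaction of recursion with polymorphism. A type variable in $type(f)$, such as the $\alpha$ in the cons type $\alpha \times \mu\beta.([~] + [\alpha \,|\, \beta]) \rightarrow \mu\beta.([~] + [\alpha \,|\, \beta])$, forces $I(f)$ to accept arguments from $\mathbf{T}|[\alpha|]_I$, which by the clause $\alpha \sim D$ for every basic or tree domain is the union of all term domains; reconciling this with the requirement that a member of $\mathbf{F}$ carry a tuple of genuine domains as its argument domain is the delicate point. I would handle it by reasoning per ground instance, checking the fixed-point closure uniformly in the instantiation of the type variables, so that the single constructor $I(f)$ witnesses the membership for every instance simultaneously. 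Assembling the constant and function-symbol cases then yields the statement for every symbol $t$.
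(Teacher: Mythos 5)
Your proposal is correct in approach but does substantially more work than the paper, whose entire proof of this lemma is the single sentence that the result ``follows from the definition of function $\mathbf{T}|[~|]$'' because, there being infinitely many interpretation functions, one can always choose an appropriate one. You actually build the witness: a term-model $I$ in which constants are placed into the basic domains dictated by $type$ and each function symbol is interpreted as its own tree constructor, followed by a verification against the clauses defining $\mathbf{T}|[~|]$, with the recursive case discharged by the fixed-point equation $fix(F)=F(fix(F))$. That is the argument the paper is implicitly relying on, and making it explicit buys something real: it shows exactly which properties of $\mathbf{T}|[~|]$ and of $\sim$ are used, and it surfaces two soft spots that the one-line proof hides. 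First, $\mathbf{T}|[~|]$ is never defined on arrow types, so $\mathbf{T}|[type(f)|]_I$ for a function symbol of arity $n \geq 1$ is, strictly speaking, undefined; your reduction of the claim to ``arguments of the input types are sent to a result of type $\tau\prime$'' is the sensible repair, but it is a repair and should be flagged as such. Second, the polymorphism issue you identify is real and your sketch does not quite close it: $\mathbf{F}$ requires $domain(I(f))$ to be a fixed tuple of basic or tree domains, while $\mathbf{T}|[\alpha|]_I$ is the union of all of them, and ``reasoning per ground instance'' does not by itself yield a single tuple of domains for the one function $I(f)$ --- you would need either to admit unions of domains as argument domains (as the paper itself quietly does in its example $f_p :: B_1 \cup B_2 \to Bool$) or to check the containment only for ground instances of $type(f)$. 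A smaller omission: your constant case treats only $type(c)$ of the form $\kappa$ or $bs$, whereas the paper's own example has $type([~]) = \mu\beta.([~] + [\alpha~|~\beta])$, so the sum and $\mu$ cases are needed for constants too (immediate, since $[~] \in F(\emptyset) \subseteq fix(F)$, but it should be said). None of this is a gap relative to the paper --- the paper proves strictly less --- but each point needs a sentence before your construction is airtight.
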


Finally, our main result shows that the type system is semantically sound, meaning that if a program has a type in our type system, then the program and its type are related by the semantic typing relation defined in Definition \ref{STR}.

\begin{theorem}[(Semantic Soundness)]\label{SS}
Let $P$ be a program, then $\Gamma \vdash_{P} M:\tau \implies \exists I. (\Gamma \models_{P,I} M:\tau)$
\end{theorem}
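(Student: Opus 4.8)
The plan is to prove the implication by structural induction on the derivation of $\Gamma \vdash_{P} M:\tau$, after first fixing the interpretation $I$ that will serve as the existential witness. For the latter I would invoke Lemma \ref{IE} (Interpretation Existence): given the function $type$, it supplies an interpretation $I$ that is consistent with $type$ on every constant and function symbol, i.e. $|[t|]_{I,\sigma}\in\mathbf{T}|[type(t)|]_I$. This single $I$ is used uniformly across all cases, so the remaining work is entirely to establish $\Gamma\models_{P,I}M:\tau$ for this fixed $I$. Since no type is ever related by $\sim$ to the domain $\mathbf{W}$, the sets $\mathbf{T}|[\tau|]_I$ never contain $wrong$; hence each case reduces to showing both that the well-typed phrase yields a value inside $\mathbf{T}|[\tau|]_I$ and, in particular, that it does not evaluate to $wrong$.

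For the base cases I would unfold Definition \ref{STR} with $or\_degree=1$, so that the only admissible decomposition is the trivial list $[\Gamma]$. Rule VAR reduces immediately to Definition \ref{JFS}, since $|[\Gamma\cup\{X:\tau\}|]_{I,\sigma}$ entails $X:_{I,\sigma}\tau$, and rule CST is exactly the content of Lemma \ref{IE}. For the inductive term and goal cases (CPL, UNF, CON) I would apply the induction hypotheses to the premises and then inspect the matching semantic clause: for CPL the domain check inside $|[f(t_1,\dots,t_n)|]_{I,\sigma}$ succeeds because each argument lands in $\mathbf{T}|[\tau_i|]_I$ and $I$ respects $type(f)$, so the $\textbf{else } wrong$ branch is never taken; for UNF the two operands share the type $\tau$ and therefore lie in a common domain, so the equality returns $true$ or $false$ rather than $wrong$; for CON the conjunction of $bool$-valued goals stays in $\{true,false\}$ by the truth table. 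The CLL case is where I would use the soundness of the subtyping relation: the side condition $\tau_i\le\tau_i'$ transports membership along $\le$, so that the call's arguments are accepted at the type at which $p$ is defined and the call evaluates into $\mathbf{T}|[bool|]_I$.

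The crux of the argument is the pair of clause rules CLS and RCLS, and I expect RCLS to be the main obstacle. Here $or\_degree(M)=m>1$, so Definition \ref{STR} demands that I produce a concrete decomposition $[\Gamma_1,\dots,\Gamma_m]$ with $\Gamma_1\oplus\dots\oplus\Gamma_m=\Gamma$; the natural choice, dictated by the shape of the rule, is $\Gamma_i=\Gamma\cup\{\bar{X}:\bar{\tau_i}\}$, whose argument-wise sum is precisely the premise context $\Gamma\cup\{\bar{X}:\bar{\tau_1}+\dots+\bar{\tau_m}\}$ by Definition \ref{OPLUS}. With this decomposition fixed, I would push the quantifier over $\bar{\sigma}=[\sigma_1,\dots,\sigma_m]$ inside and show that the clause's defining implication, namely $|[b_1;\dots;b_m|]_{I,\bar{\sigma}}\implies\big(|[p(\bar{X})|]_{I,[\sigma_1]}\wedge\dots\wedge|[p(\bar{X})|]_{I,[\sigma_m]}\big)$, lands in $\{true,false\}$. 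By the induction hypotheses each body $b_i$, evaluated in its own state $\sigma_i$, avoids $wrong$, so the disjunction avoids $wrong$; it then remains to check that the head conjunction avoids $wrong$ as well, which reduces to the call $p(\bar{X})$ being accepted at the type recorded in $\Gamma_i$.

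The genuinely delicate point is RCLS, where the head predicate $p$ reappears inside the body, so that the meaning of $p(\bar{X})$ depends on the very function $f_p=I(p)$ whose type-correctness we are trying to establish. To break this circularity I would lean on the Monomorphism Restriction (Definition \ref{MR}), enforced syntactically by RCLS through the reuse of the types $\bar{\tau_j}$ for the recursive-call variables $\bar{Y}_{j\ell}$: because every recursive occurrence carries the same argument types as the head, any state satisfying $\Gamma_i$ feeds each recursive call arguments that already lie in the correct $\mathbf{T}|[\bar{\tau_j}|]_I$, so the domain check in the call semantics cannot fail and no $wrong$ is produced. I would also thread through the induction, as an invariant, that the witnessing $I$ interprets $p$ as a function whose signature matches the domains named by $\bar{\tau_j}$, i.e. that $f_p\in\mathbf{P}|[\bar{\tau_j}\to bool|]_I$; this is the property that makes the head conjunction $bool$-valued and, together with the subtyping soundness used on the non-recursive bodies, closes the induction.
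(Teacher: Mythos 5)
Your overall architecture coincides with the paper's: induction over the typing rules, the trivial list $[\Gamma]$ when $or\_degree(M)=1$, the decomposition $\Gamma_i=\Gamma\cup\{\bar{X}:\bar{\tau_i}\}$ for CLS and RCLS, Lemma \ref{IE} for constants, and the subtyping soundness lemma at CLL. The one point where you genuinely diverge is also where the gap lies: you fix a single witnessing interpretation $I$ up front via Lemma \ref{IE} and use it uniformly, but Lemma \ref{IE} only constrains $I$ on constants and function symbols --- it says nothing about predicate symbols. Consequently the invariant you promise to ``thread through the induction'', namely that $I(p)\in\mathbf{P}|[\tau_1\times\dots\times\tau_n\rightarrow bool|]_I$ for the type the derivation assigns to $p$, is never actually established for your chosen $I$: nothing prevents the interpretation supplied by Lemma \ref{IE} from mapping $p$ to a function over the wrong domains, in which case the domain check in the semantics of $p(\bar{X})$ takes the \textbf{else} branch, the head conjunction in CLS/RCLS (and the call in CLL) evaluates to $wrong$, and the conclusion fails even though all bodies are well behaved.

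The paper avoids this precisely by not committing to $I$ in advance: at each inductive step it merges the interpretations produced by the induction hypotheses and, in the CPL, CLS and RCLS cases, explicitly patches the interpretation of the outermost symbol (``we can create $I$ by changing only the semantic function of $p$ in $I\prime$'' for an $f::D_1\times\dots\times D_n\rightarrow D$ with $\tau_{1,i}+\dots+\tau_{m,i}\sim D_i$). To repair your version while keeping the single up-front $I$, you need a strengthened existence lemma: given the typing derivations for the whole program, there is one $I$ that satisfies Lemma \ref{IE} on constants and function symbols \emph{and} interprets every predicate symbol by a function whose argument domains are the ones $\sim$-related to the argument types of its clause. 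This is provable (each predicate has a single normalized clause, hence a single derived type), and it would arguably make the argument cleaner than the paper's --- the paper's merging of the several $I_i$ coming from separate induction hypotheses is itself only sound because such a global $I$ exists --- but as written your proposal asserts the needed property of $I$ rather than proving it. The rest of your case analysis, including the use of the Monomorphism Restriction to justify that recursive calls are fed arguments in the right domains, matches the paper's intent once this is fixed.
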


Note that the value \emph{wrong} has no type, thus, as a corollary of the soundness theorem, we have that if a predicate is statically well-typed, then there is an interpretation $I$, for which the predicate semantics is not \emph{wrong}.

\section{Conclusion}

Here we present a new semantics for logic programming which captures the notion of type error with another value in the logic itself. For a restricted language we have shown that this semantics can be used to prove that a type system for logic programming is correct based on a notion that connects the semantics of programs and the semantics of types.

The next step, left for future work, will be the definition of a type inference algorithm, sound with respect to the type system presented in this paper, that automatically infers types for programs and may use type declaration in the form of data structures, although that is optional. 
\\

{\small {\bf Acknowledgments} This work is partially funded by FCT within project Elven
POCI-01-0145-FEDER-016844, Project 9471 - Reforcar a
Investigacao, o Desenvolvimento Tecnologico e a Inovacao
(Project 9471-RIDTI), by project PTDC/EEI-CTP/3506/2014, and by Fundo Comunitario Europeu
FEDER.}

\bibliographystyle{eptcs}

\begin{thebibliography}{10}
\providecommand{\bibitemdeclare}[2]{}
\providecommand{\surnamestart}{}
\providecommand{\surnameend}{}
\providecommand{\urlprefix}{Available at }
\providecommand{\url}[1]{\texttt{#1}}
\providecommand{\href}[2]{\texttt{#2}}
\providecommand{\urlalt}[2]{\href{#1}{#2}}
\providecommand{\doi}[1]{doi:\urlalt{http://dx.doi.org/#1}{#1}}
\providecommand{\bibinfo}[2]{#2}

\bibitemdeclare{inproceedings}{BarbosaFloridoCosta17}
\bibitem{BarbosaFloridoCosta17}
\bibinfo{author}{J.~\surnamestart Barbosa\surnameend},
  \bibinfo{author}{M.~\surnamestart Florido\surnameend} \&
  \bibinfo{author}{V.~\surnamestart Santos~Costa\surnameend}
  (\bibinfo{year}{2017}): \emph{\bibinfo{title}{Closed Types for Logic
  Programming}}.
\newblock In: {\sl \bibinfo{booktitle}{25th Int. Workshop on Functional and
  Logic Programming (WFLP 2017)}}.

\bibitemdeclare{article}{DBLP:journals/scp/BarbutiG92}
\bibitem{DBLP:journals/scp/BarbutiG92}
\bibinfo{author}{Roberto \surnamestart Barbuti\surnameend} \&
  \bibinfo{author}{Roberto \surnamestart Giacobazzi\surnameend}
  (\bibinfo{year}{1992}): \emph{\bibinfo{title}{A Bottom-Up Polymorphic Type
  Inference in Logic Programming}}.
\newblock {\sl \bibinfo{journal}{Sci. Comput. Program.}}
  \bibinfo{volume}{19}(\bibinfo{number}{3}), pp. \bibinfo{pages}{281--313},
  \doi{10.1016/0167-6423(92)90038-D}.

\bibitemdeclare{article}{AJL}
\bibitem{AJL}
\bibinfo{author}{Jc~\surnamestart Beall\surnameend} (\bibinfo{year}{2016}):
  \emph{\bibinfo{title}{Off-Topic: A New Interpretation of Weak-Kleene Logic}}.
\newblock {\sl \bibinfo{journal}{The Australasian Journal of Logic}}
  \bibinfo{volume}{13}(\bibinfo{number}{6}), \doi{10.26686/ajl.v13i6.3976}.

\bibitemdeclare{article}{3value}
\bibitem{3value}
\bibinfo{author}{D.A. \surnamestart Bochvar\surnameend} \&
  \bibinfo{author}{Merrie \surnamestart Bergmann\surnameend}
  (\bibinfo{year}{1981}): \emph{\bibinfo{title}{On a three-valued logical
  calculus and its application to the analysis of the paradoxes of the
  classical extended functional calculus}}.
\newblock {\sl \bibinfo{journal}{History and Philosophy of Logic}}
  \bibinfo{volume}{2}(\bibinfo{number}{1-2}), pp. \bibinfo{pages}{87--112}, \doi{10.1080/01445348108837023}.

\bibitemdeclare{inproceedings}{DBLP:conf/iclp/BruynoogheJ88}
\bibitem{DBLP:conf/iclp/BruynoogheJ88}
\bibinfo{author}{Maurice \surnamestart Bruynooghe\surnameend} \&
  \bibinfo{author}{Gerda \surnamestart Janssens\surnameend}
  (\bibinfo{year}{1988}): \emph{\bibinfo{title}{An Instance of Abstract
  Interpretation Integrating Type and Mode Inferencing}}.
\newblock In: {\sl \bibinfo{booktitle}{Logic Programming, Proceedings of the
  Fifth International Conference and Symposium, Seattle, Washington, USA,
  August 15-19, 1988 {(2} Volumes)}}, pp. \bibinfo{pages}{669--683}.

\bibitemdeclare{incollection}{DBLP:books/mit/pfenning92/DartZ92}
\bibitem{DBLP:books/mit/pfenning92/DartZ92}
\bibinfo{author}{Philip~W. \surnamestart Dart\surnameend} \&
  \bibinfo{author}{Justin \surnamestart Zobel\surnameend}
  (\bibinfo{year}{1992}): \emph{\bibinfo{title}{A Regular Type Language for
  Logic Programs}}.
\newblock In: {\sl \bibinfo{booktitle}{Types in Logic Programming}}, pp.
  \bibinfo{pages}{157--187}.

\bibitemdeclare{article}{VanEmden:1976:SPL:321978.321991}
\bibitem{VanEmden:1976:SPL:321978.321991}
\bibinfo{author}{Maarten~H. \surnamestart van Emden\surnameend} \&
  \bibinfo{author}{Robert~A. \surnamestart Kowalski\surnameend}
  (\bibinfo{year}{1976}): \emph{\bibinfo{title}{The Semantics of Predicate
  Logic as a Programming Language}}.
\newblock {\sl \bibinfo{journal}{J. {ACM}}}
  \bibinfo{volume}{23}(\bibinfo{number}{4}), pp. \bibinfo{pages}{733--742},
  \doi{10.1145/321978.321991}.

\bibitemdeclare{inproceedings}{TAT}
\bibitem{TAT}
\bibinfo{author}{M\'{a}rio \surnamestart Florido\surnameend} \&
  \bibinfo{author}{Lu\'{i}s \surnamestart Damas\surnameend}
  (\bibinfo{year}{1992}): \emph{\bibinfo{title}{Types as Theories}}.
\newblock In: {\sl \bibinfo{booktitle}{Proc. of post-conference workshop on
  Proofs and Types, Joint International Conference and Symposium on Logic
  Programming}}.

\bibitemdeclare{inproceedings}{DBLP:conf/lics/FruhwirthSVY91}
\bibitem{DBLP:conf/lics/FruhwirthSVY91}
\bibinfo{author}{Thom~W. \surnamestart Fr{\"{u}}hwirth\surnameend},
  \bibinfo{author}{Ehud \surnamestart Shapiro\surnameend},
  \bibinfo{author}{Moshe~Y. \surnamestart Vardi\surnameend} \&
  \bibinfo{author}{Eyal \surnamestart Yardeni\surnameend}
  (\bibinfo{year}{1991}): \emph{\bibinfo{title}{Logic Programs as Types for
  Logic Programs}}.
\newblock In: {\sl \bibinfo{booktitle}{Proceedings of the Sixth Annual
  Symposium on Logic in Computer Science {(LICS} '91), Amsterdam, The
  Netherlands, July 15-18, 1991}}, pp. \bibinfo{pages}{300--309},
  \doi{10.1109/LICS.1991.151654}.

\bibitemdeclare{inproceedings}{DBLP:conf/iclp/GallagherW94}
\bibitem{DBLP:conf/iclp/GallagherW94}
\bibinfo{author}{John~P. \surnamestart Gallagher\surnameend} \&
  \bibinfo{author}{D.~Andre \surnamestart de~Waal\surnameend}
  (\bibinfo{year}{1994}): \emph{\bibinfo{title}{Fast and Precise Regular
  Approximations of Logic Programs}}.
\newblock In: {\sl \bibinfo{booktitle}{Logic Programming, Proceedings of the
  Eleventh International Conference on Logic Programming, Santa Marherita
  Ligure, Italy, June 13-18, 1994}}, pp. \bibinfo{pages}{599--613}.

\bibitemdeclare{inproceedings}{DBLP:conf/birthday/Hanus13}
\bibitem{DBLP:conf/birthday/Hanus13}
\bibinfo{author}{Michael \surnamestart Hanus\surnameend}
  (\bibinfo{year}{2013}): \emph{\bibinfo{title}{Functional Logic Programming:
  From Theory to Curry}}.
\newblock In: {\sl \bibinfo{booktitle}{Programming Logics - Essays in Memory of
  Harald Ganzinger}}, pp. \bibinfo{pages}{123--168},
  \doi{10.1007/978-3-642-37651-1\_6}.

\bibitemdeclare{article}{Henglein:1993:TIP:169701.169692}
\bibitem{Henglein:1993:TIP:169701.169692}
\bibinfo{author}{Fritz \surnamestart Henglein\surnameend}
  (\bibinfo{year}{1993}): \emph{\bibinfo{title}{Type Inference with Polymorphic
  Recursion}}.
\newblock {\sl \bibinfo{journal}{{ACM} Trans. Program. Lang. Syst.}}
  \bibinfo{volume}{15}(\bibinfo{number}{2}), pp. \bibinfo{pages}{253--289},
  \doi{10.1145/169701.169692}.

\bibitemdeclare{article}{DBLP:journals/tplp/HermenegildoBCLMMP12}
\bibitem{DBLP:journals/tplp/HermenegildoBCLMMP12}
\bibinfo{author}{Manuel~V. \surnamestart Hermenegildo\surnameend},
  \bibinfo{author}{Francisco \surnamestart Bueno\surnameend},
  \bibinfo{author}{Manuel \surnamestart Carro\surnameend},
  \bibinfo{author}{Pedro \surnamestart L{\'{o}}pez{-}Garc{\'{\i}}a\surnameend},
  \bibinfo{author}{Edison \surnamestart Mera\surnameend},
  \bibinfo{author}{Jos{\'{e}}~F. \surnamestart Morales\surnameend} \&
  \bibinfo{author}{Germ{\'{a}}n \surnamestart Puebla\surnameend}
  (\bibinfo{year}{2012}): \emph{\bibinfo{title}{An overview of Ciao and its
  design philosophy}}.
\newblock {\sl \bibinfo{journal}{{TPLP}}}
  \bibinfo{volume}{12}(\bibinfo{number}{1-2}), pp. \bibinfo{pages}{219--252},
  \doi{10.1017/S1471068411000457}.

\bibitemdeclare{book}{DBLP:books/daglib/0095081}
\bibitem{DBLP:books/daglib/0095081}
\bibinfo{author}{Patricia~M. \surnamestart Hill\surnameend} \&
  \bibinfo{author}{John~W. \surnamestart Lloyd\surnameend}
  (\bibinfo{year}{1994}): \emph{\bibinfo{title}{The G{\"{o}}del programming
  language}}.
\newblock \bibinfo{publisher}{{MIT} Press}.

\bibitemdeclare{article}{Kfoury:1993:TRP:169701.169687}
\bibitem{Kfoury:1993:TRP:169701.169687}
\bibinfo{author}{A.~J. \surnamestart Kfoury\surnameend}, \bibinfo{author}{Jerzy
  \surnamestart Tiuryn\surnameend} \& \bibinfo{author}{Pawel \surnamestart
  Urzyczyn\surnameend} (\bibinfo{year}{1993}): \emph{\bibinfo{title}{Type
  Reconstruction in the Presence of Polymorphic Recursion}}.
\newblock {\sl \bibinfo{journal}{{ACM} Trans. Program. Lang. Syst.}}
  \bibinfo{volume}{15}(\bibinfo{number}{2}), pp. \bibinfo{pages}{290--311},
  \doi{10.1145/169701.169687}.

\bibitemdeclare{misc}{klccne1952im}
\bibitem{klccne1952im}
\bibinfo{author}{S.~C. \surnamestart Kleene\surnameend} (\bibinfo{year}{1952}):
  \emph{\bibinfo{title}{{Introduction to Metamathematics}}}.

\bibitemdeclare{article}{GlossarWiki:Kleene:1938}
\bibitem{GlossarWiki:Kleene:1938}
\bibinfo{author}{Stephen~Cole \surnamestart Kleene\surnameend}
  (\bibinfo{year}{1938}): \emph{\bibinfo{title}{On Notation for Ordinal
  Numbers}}.
\newblock {\sl \bibinfo{journal}{J. Symb. Log.}}
  \bibinfo{volume}{3}(\bibinfo{number}{4}), pp. \bibinfo{pages}{150--155},
  \doi{10.2307/2267778}.

\bibitemdeclare{inproceedings}{DBLP:conf/slp/LakshmanR91}
\bibitem{DBLP:conf/slp/LakshmanR91}
\bibinfo{author}{T.~L. \surnamestart Lakshman\surnameend} \&
  \bibinfo{author}{Uday~S. \surnamestart Reddy\surnameend}
  (\bibinfo{year}{1991}): \emph{\bibinfo{title}{Typed Prolog: {A} Semantic
  Reconstruction of the Mycroft-O'Keefe Type System}}.
\newblock In: {\sl \bibinfo{booktitle}{Logic Programming, Proceedings of the
  1991 International Symposium, San Diego, California, USA, Oct. 28 - Nov 1,
  1991}}, pp. \bibinfo{pages}{202--217}.

\bibitemdeclare{article}{DBLP:journals/jcss/Milner78}
\bibitem{DBLP:journals/jcss/Milner78}
\bibinfo{author}{Robin \surnamestart Milner\surnameend} (\bibinfo{year}{1978}):
  \emph{\bibinfo{title}{A Theory of Type Polymorphism in Programming}}.
\newblock {\sl \bibinfo{journal}{J. Comput. Syst. Sci.}}
  \bibinfo{volume}{17}(\bibinfo{number}{3}), pp. \bibinfo{pages}{348--375},
  \doi{10.1016/0022-0000(78)90014-4}.

\bibitemdeclare{article}{DBLP:journals/ai/MycroftO84}
\bibitem{DBLP:journals/ai/MycroftO84}
\bibinfo{author}{Alan \surnamestart Mycroft\surnameend} \&
  \bibinfo{author}{Richard~A. \surnamestart O'Keefe\surnameend}
  (\bibinfo{year}{1984}): \emph{\bibinfo{title}{A Polymorphic Type System for
  Prolog}}.
\newblock {\sl \bibinfo{journal}{Artif. Intell.}}
  \bibinfo{volume}{23}(\bibinfo{number}{3}), pp. \bibinfo{pages}{295--307},
  \doi{10.1016/0004-3702(84)90017-1}.

\bibitemdeclare{incollection}{DBLP:books/mit/pfenning92/Naish92}
\bibitem{DBLP:books/mit/pfenning92/Naish92}
\bibinfo{author}{Lee \surnamestart Naish\surnameend} (\bibinfo{year}{1992}):
  \emph{\bibinfo{title}{Types and the Intended Meaning of Logic Programs}}.
\newblock In: {\sl \bibinfo{booktitle}{Types in Logic Programming}}, pp.
  \bibinfo{pages}{189--216}.

\bibitemdeclare{article}{DBLP:journals/tplp/Naish06}
\bibitem{DBLP:journals/tplp/Naish06}
\bibinfo{author}{Lee \surnamestart Naish\surnameend} (\bibinfo{year}{2006}):
  \emph{\bibinfo{title}{A three-valued semantics for logic programmers}}.
\newblock {\sl \bibinfo{journal}{{TPLP}}}
  \bibinfo{volume}{6}(\bibinfo{number}{5}), pp. \bibinfo{pages}{509--538},
  \doi{10.1017/S1471068406002742}.

\bibitemdeclare{inproceedings}{Pereira:1986:RDL:645520.656148}
\bibitem{Pereira:1986:RDL:645520.656148}
\bibinfo{author}{Lu\'{\i}s~Moniz \surnamestart Pereira\surnameend}
  (\bibinfo{year}{1986}): \emph{\bibinfo{title}{Rational Debugging in Logic
  Programming}}.
\newblock In: {\sl \bibinfo{booktitle}{Proceedings of the Third International
  Conference on Logic Programming}}, \bibinfo{publisher}{Springer-Verlag},
  \bibinfo{address}{London, UK, UK}, pp. \bibinfo{pages}{203--210}, \doi{10.1007/3-540-16492-8_76}.

\bibitemdeclare{inproceedings}{DBLP:conf/slp/PyoR89}
\bibitem{DBLP:conf/slp/PyoR89}
\bibinfo{author}{Changwoo \surnamestart Pyo\surnameend} \&
  \bibinfo{author}{Uday~S. \surnamestart Reddy\surnameend}
  (\bibinfo{year}{1989}): \emph{\bibinfo{title}{Inference of Polymorphic Types
  for Logic Programs}}.
\newblock In: {\sl \bibinfo{booktitle}{Logic Programming, Proceedings of the
  North American Conference 1989, Cleveland, Ohio, USA, October 16-20, 1989. 2
  Volumes}}, pp. \bibinfo{pages}{1115--1132}.

\bibitemdeclare{phdthesis}{VanRoy:1991:LPE:128589}
\bibitem{VanRoy:1991:LPE:128589}
\bibinfo{author}{Peter~Van \surnamestart Roy\surnameend}
  (\bibinfo{year}{1990}): \emph{\bibinfo{title}{Can Logic Programming Execute
  as Fast as Imperative Programming?}}
\newblock Ph.D. thesis, \bibinfo{school}{University of California at Berkeley}.

\bibitemdeclare{inproceedings}{10.1007/BFb0026840}
\bibitem{10.1007/BFb0026840}
\bibinfo{author}{H{\"{u}}seyin \surnamestart Saglam\surnameend} \&
  \bibinfo{author}{John~P. \surnamestart Gallagher\surnameend}
  (\bibinfo{year}{1995}): \emph{\bibinfo{title}{Approximating Constraint Logic
  Programs Using Polymorphic Types and Regular Descriptions}}.
\newblock In: {\sl \bibinfo{booktitle}{Programming Languages: Implementations,
  Logics and Programs, 7th International Symposium, PLILP'95, Utrecht, The
  Netherlands, September 20-22, 1995, Proceedings}}, pp.
  \bibinfo{pages}{461--462}, \doi{10.1007/BFb0026840}.

\bibitemdeclare{inproceedings}{DBLP:conf/iclp/SchrijversCWD08}
\bibitem{DBLP:conf/iclp/SchrijversCWD08}
\bibinfo{author}{Tom \surnamestart Schrijvers\surnameend},
  \bibinfo{author}{V{\'{\i}}tor~Santos \surnamestart Costa\surnameend},
  \bibinfo{author}{Jan \surnamestart Wielemaker\surnameend} \&
  \bibinfo{author}{Bart \surnamestart Demoen\surnameend}
  (\bibinfo{year}{2008}): \emph{\bibinfo{title}{Towards Typed Prolog}}.
\newblock In: {\sl \bibinfo{booktitle}{Logic Programming, 24th International
  Conference, {ICLP} 2008, Udine, Italy, December 9-13 2008, Proceedings}}, pp.
  \bibinfo{pages}{693--697}, \doi{10.1007/978-3-540-89982-2\_59}.

\bibitemdeclare{article}{DBLP:journals/jlp/SomogyiHC96}
\bibitem{DBLP:journals/jlp/SomogyiHC96}
\bibinfo{author}{Zoltan \surnamestart Somogyi\surnameend},
  \bibinfo{author}{Fergus \surnamestart Henderson\surnameend} \&
  \bibinfo{author}{Thomas~C. \surnamestart Conway\surnameend}
  (\bibinfo{year}{1996}): \emph{\bibinfo{title}{The Execution Algorithm of
  Mercury, an Efficient Purely Declarative Logic Programming Language}}.
\newblock {\sl \bibinfo{journal}{J. Log. Program.}}
  \bibinfo{volume}{29}(\bibinfo{number}{1-3}), pp. \bibinfo{pages}{17--64},
  \doi{10.1016/S0743-1066(96)00068-4}.

\bibitemdeclare{inproceedings}{Vaucheret:2002:MPY:647171.718317}
\bibitem{Vaucheret:2002:MPY:647171.718317}
\bibinfo{author}{Claudio \surnamestart Vaucheret\surnameend} \&
  \bibinfo{author}{Francisco \surnamestart Bueno\surnameend}
  (\bibinfo{year}{2002}): \emph{\bibinfo{title}{More Precise Yet Efficient Type
  Inference for Logic Programs}}.
\newblock In: {\sl \bibinfo{booktitle}{Proceedings of the 9th International
  Symposium on Static Analysis}}, \bibinfo{series}{SAS '02},
  \bibinfo{publisher}{Springer-Verlag}, \bibinfo{address}{Berlin, Heidelberg},
  pp. \bibinfo{pages}{102--116}.

\bibitemdeclare{incollection}{DBLP:books/mit/pfenning92/YardeniFS92}
\bibitem{DBLP:books/mit/pfenning92/YardeniFS92}
\bibinfo{author}{Eyal \surnamestart Yardeni\surnameend},
  \bibinfo{author}{Thom~W. \surnamestart Fr{\"{u}}hwirth\surnameend} \&
  \bibinfo{author}{Ehud \surnamestart Shapiro\surnameend}
  (\bibinfo{year}{1992}): \emph{\bibinfo{title}{Polymorphically Typed Logic
  Programs}}.
\newblock In: {\sl \bibinfo{booktitle}{Types in Logic Programming}}, pp.
  \bibinfo{pages}{63--90}.

\bibitemdeclare{inproceedings}{DBLP:conf/iclp/Zobel87}
\bibitem{DBLP:conf/iclp/Zobel87}
\bibinfo{author}{Justin \surnamestart Zobel\surnameend} (\bibinfo{year}{1987}):
  \emph{\bibinfo{title}{Derivation of Polymorphic Types for {PROLOG}
  Programs}}.
\newblock In: {\sl \bibinfo{booktitle}{Logic Programming, Proceedings of the
  Fourth International Conference, Melbourne, Victoria, Australia, May 25-29,
  1987 {(2} Volumes)}}, pp. \bibinfo{pages}{817--838}.

\end{thebibliography}

\newpage
\appendix
\setcounter{thm}{0}

\begin{lemma}[(Soundness of the subtyping relation)]\label{TSS}
Given a predicate $p$ and a type $\tau$, if $p \in \mathbf{P}|[ \tau|]_{I}$, then $\forall \tau\prime. \tau \leq \tau\prime \implies p \in \mathbf{P}|[ \tau\prime|]_{I}$.
\end{lemma}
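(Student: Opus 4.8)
The plan is to proceed by induction on the derivation of the subtyping judgement $\tau \leq \tau'$, since $\leq$ is defined inductively by the five rules of Definition \ref{Subtyping}, and to read the statement as the set inclusion $\mathbf{P}|[\tau|]_I \subseteq \mathbf{P}|[\tau'|]_I$. Because the semantics of a predicate type quantifies over every argument tuple whose value lies in the argument-type semantics, I expect the whole argument to hinge on a monotonicity property of the term-type semantics $\mathbf{T}$, which I would isolate as an auxiliary claim before touching $\mathbf{P}$ at all.

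First I would prove the helper: for non-predicate simple types, $\sigma \leq \sigma'$ implies $\mathbf{T}|[\sigma|]_I \subseteq \mathbf{T}|[\sigma'|]_I$, by induction on the derivation of $\sigma \leq \sigma'$. Reflexivity is immediate, and the two union rules follow from monotonicity of $\cup$ together with the clause $\mathbf{T}|[\tau_1 + \dots + \tau_n|]_I = \mathbf{T}|[\tau_1|]_I \cup \dots \cup \mathbf{T}|[\tau_n|]_I$. The Instance case is the core of the helper: I must show $\mathbf{T}|[\phi(\sigma')|]_I \subseteq \mathbf{T}|[\sigma'|]_I$ for any substitution $\phi$, which I would establish by a secondary structural induction on $\sigma'$. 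The base case is a type variable, where $\mathbf{T}|[\alpha|]_I$ is by definition the union of all basic and tree domains and therefore contains $\mathbf{T}|[\phi(\alpha)|]_I$ for any ground term type; the constant, base, complex $f(\cdots)$, sum and $\mu$ cases then propagate the inclusion through the matching semantic clause, the recursive case reducing to monotonicity of the least fixed point $fix(F)$.

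With the helper in place, the main induction only needs the cases whose conclusion is a predicate type. Reflexivity is trivial. The substantive case is Contravariance: from $\rho' \leq \rho$ we must derive $\rho \to bool \leq \rho' \to bool$. Unfolding the definition of $\mathbf{P}$, membership $p \in \mathbf{P}|[\rho \to bool|]_I$ says that $p$ returns a value in $\mathbf{T}|[bool|]_I$ on every tuple valued in $\mathbf{T}|[\rho|]_I$; since the helper yields $\mathbf{T}|[\rho'|]_I \subseteq \mathbf{T}|[\rho|]_I$, every tuple valued in $\mathbf{T}|[\rho'|]_I$ is also valued in $\mathbf{T}|[\rho|]_I$, so the required conclusion $p \in \mathbf{P}|[\rho' \to bool|]_I$ follows directly. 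For the quantified case I would reduce to the predicate-type case through the intersection clause $\mathbf{P}|[\forall \alpha. \tau|]_I = \bigcap_{\tau'} \mathbf{P}|[\tau[\tau'/\alpha]|]_I$, using that an inclusion holding for each ground instance is inherited by the intersection.

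The step I expect to be the main obstacle is the Instance rule when it is read directly at the level of predicate types, i.e. $\tau = \phi(\tau')$ with $\tau'$ the more general predicate type. The intersection clause makes $\mathbf{P}|[\tau'|]_I$ the \emph{smaller} set, so a naive application pushes the inclusion opposite to the direction supplied by Contravariance. I would therefore have to argue that instantiation is confined to the argument (term) types, where the helper makes it sound, and that subtyping between predicate types is obtained only through Reflexivity and Contravariance; pinning down this restriction and verifying that it is exactly what the soundness theorem invoking this lemma needs is the delicate part of the proof.
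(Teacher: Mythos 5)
Your proposal is correct, but there is little in the paper to compare it against: the paper's entire proof of this lemma is the single sentence that the result ``follows directly from the definition of the $\leq$ relation and the definition of $\mathbf{P}|[~|]$''. Your decomposition --- induction on the derivation of $\tau \leq \tau'$, resting on the auxiliary monotonicity fact that $\sigma \leq \sigma'$ implies $\mathbf{T}|[ \sigma|]_I \subseteq \mathbf{T}|[ \sigma'|]_I$ for argument types, which is then fed into the contravariant unfolding of $\mathbf{P}|[ \cdot \rightarrow bool|]_I$ --- is precisely the content that the paper's one-liner implicitly appeals to, so you are on the same route, just actually walking it. The helper lemma, the secondary induction for the Instance case, and the treatment of $\forall$ via the intersection clause are all sound.

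The obstacle you flag at the end is genuine and is not addressed by the paper. If the Instance rule is allowed to act on predicate types, then for example $int \rightarrow bool \leq \alpha \rightarrow bool$; since free type variables in predicate types are implicitly universally quantified, $\mathbf{P}|[ \alpha \rightarrow bool|]_I = \bigcap_{\tau''} \mathbf{P}|[ \tau'' \rightarrow bool|]_I$, which is in general a proper subset of $\mathbf{P}|[ int \rightarrow bool|]_I$, and the claimed inclusion fails. The lemma is therefore only true under the restricted reading in which Instance and the two union rules operate on argument (simple) types, with subtyping between predicate types generated solely by Reflexivity and Contravariance. That restricted reading is all the Semantic Soundness theorem needs: its CLL case derives $\tau_1' \times \dots \times \tau_n' \rightarrow bool \leq \tau_1 \times \dots \times \tau_n \rightarrow bool$ from argument-wise $\tau_i \leq \tau_i'$ via Contravariance only. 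So your instinct to pin down this restriction, and to check it against the one place the lemma is invoked, is exactly the right way to close the proof; it is a gap in the paper's statement rather than in your argument.
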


\begin{proof} The proof follows directly from the definition of the $\leq$ relation and the definition of $\mathbf{P}|[~|]$.
\end{proof}

\begin{lemma}[(Interpretation Existence)]
Given a function \emph{type}, there is always an interpretation $I$, such that for any constant or function symbol, here denoted by symbol $t$, and any state $\sigma$, $|[t|]_{I,\sigma} \in \mathbf{T}|[type(t)|]_I$.
\end{lemma}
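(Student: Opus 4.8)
The plan is to construct the interpretation $I$ explicitly by induction on the structure of types, working in tandem with the definition of the $\sim$ relation and the $\mathbf{T}\llbracket\,\rrbracket$ semantics. The key observation is that the function $type$ assigns to each constant $c$ a type $type(c)$ and to each function symbol $f$ of arity $n$ a type of the form $\tau_1 \times \dots \times \tau_n \to \tau'$. I need to define $I(c)$ and $I(f)$ so that the semantic value of each term lands inside the semantics of its declared type. First I would fix, for each base type $bs$, the predefined basic domain $B$ with $bs \sim B$; this association is given, so the $\sim$ relation is already fully determined on base types and then lifted to all types by the clauses defining $\sim$.

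The construction proceeds by structural induction. For a constant $c$ with $type(c) = \kappa$ a type constant, I would set $I(c) = \kappa$, and then $\llbracket c \rrbracket_{I,\sigma} = I(c) = \kappa \in \{\kappa\} = \mathbf{T}\llbracket\kappa\rrbracket_I$ directly. For a constant with a base type, I would pick any element of the associated basic domain. For a function symbol $f$ with $type(f) = \tau_1 \times \dots \times \tau_n \to \tau'$, I would define $I(f)$ to be the semantic function that, given arguments $v_1, \dots, v_n$ in the domains $D_1, \dots, D_n$ where $\tau_i \sim D_i$, returns a value in the domain $D'$ with $\tau' \sim D'$. The natural choice, matching the $\mathbf{T}\llbracket f(\tau_1,\dots,\tau_n)\rrbracket_I$ clause, is to let $I(f)(v_1,\dots,v_n) = f(v_1,\dots,v_n)$ as a tree value, so that the result is a finite tree in the tree domain associated with $\tau'$. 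I then need to check that $domain(I(f)) = (D_1,\dots,D_n)$ agrees with the argument domains, so the guard in the term semantics for $f(t_1,\dots,t_n)$ succeeds and the \textbf{else} branch yielding $wrong$ is never taken.

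The verification that $\llbracket f(t_1,\dots,t_n)\rrbracket_{I,\sigma} \in \mathbf{T}\llbracket type(f)\rrbracket_I$ then follows from the induction hypothesis applied to each $t_i$: by hypothesis $\llbracket t_i\rrbracket_{I,\sigma} \in \mathbf{T}\llbracket\tau_i\rrbracket_I$, which places each argument value in the correct domain $D_i$, so the domain guard holds, and the output $f(\llbracket t_1\rrbracket,\dots,\llbracket t_n\rrbracket)$ lies in $\mathbf{T}\llbracket\tau'\rrbracket_I$ by the defining clause for the semantics of a complex type. Care is needed to handle recursive and sum types: when $\tau'$ is a recursive type $\mu\alpha.\tau$ or a sum, I must check the output value lands in the corresponding fixed point or union, which is where the agreement between $\sim$ and $\mathbf{T}\llbracket\,\rrbracket$ on these constructs, together with the well-definedness of $fix(F)$ by the Kleene fixed-point theorem cited earlier, does the work.

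The main obstacle I anticipate is the treatment of recursive types and the coherence of the domain assignment. For a recursively declared type such as $list(\alpha) = [\,] + [\alpha \mid list(\alpha)]$, the constructor $[\,\mid\,]$ has $type([\,\mid\,]) = \alpha \times \mu\beta.([\,]+[\alpha\mid\beta]) \to \mu\beta.([\,]+[\alpha\mid\beta])$, so I must verify that feeding a head value and a tail value that is itself a finite list produces a value in the fixed point $fix(F)$, i.e. again a finite list. This requires that the semantic tree-building of $I([\,\mid\,])$ respects the fixed-point structure, which amounts to checking $F(fix(F)) = fix(F)$ and that the constructor maps $D_1 \times fix(F)$ into $fix(F)$. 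I expect this to reduce to unfolding one step of $F$, but it is the step where the interplay of the $\sim$ clause for $\mu\alpha.\tau$, the definition of $fix$, and the term semantics must all be reconciled simultaneously, so it deserves the most careful argument.
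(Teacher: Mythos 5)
Your proposal is correct in substance, but it takes a genuinely different — and far more explicit — route than the paper. The paper's entire proof is one sentence: it asserts that the claim "follows from the definition of $\mathbf{T}|[~|]$" because "there is an infinite number of possible interpretation functions $I$," so an appropriate one can always be chosen; no construction is given. You instead build $I$ by structural induction on the declared types, setting $I(c)=\kappa$ for type constants, choosing an inhabitant of the associated basic domain for base types, and letting $I(f)$ be the syntactic tree constructor so that the clause $\mathbf{T}|[ f(\tau_1,\dots,\tau_n)|]_I = \{f(v_1,\dots,v_n)~|~v_i \in \mathbf{T}|[\tau_i|]_I\}$ is satisfied by construction, with the recursive case discharged by one unfolding of $fix(F)$. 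This buys a verifiable argument where the paper offers only an appeal to plausibility, and it correctly isolates the two points where real work is needed (coherence of the domain assignment with $\sim$, and closure of $fix(F)$ under the constructors). Two caveats worth noting: the lemma as stated quantifies over bare constant and function \emph{symbols}, whereas your verification of the domain guard and of $|[ f(t_1,\dots,t_n)|]_{I,\sigma}\in\mathbf{T}|[\tau\prime|]_I$ proves a stronger statement about compound terms — stronger than needed here, and in the polymorphic case (an argument position typed $\alpha$) the guard $(domain(|[ t_1|]_{I,\sigma}),\dots)\subseteq domain(I(f))$ is delicate because $I(f)$ must be assigned one fixed tuple of domains while $\mathbf{T}|[\alpha|]_I$ ranges over all basic and tree domains. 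Neither issue undermines the lemma as literally stated, and your construction is exactly the content the paper's proof leaves implicit.
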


\begin{proof}
The proof follows from the definition of function $\mathbf{T|[~|]}$, noting that, as there is an infinite number of possible interpretation functions $I$, it is always possible to choose an appropriate interpretation.
\end{proof}

\begin{theorem}[(Semantic Soundness)]
Let $P$ be a program, then $\Gamma \vdash_{P} M:\tau \implies \exists I. (\Gamma \models_{P,I} M:\tau)$
\end{theorem}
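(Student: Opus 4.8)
The plan is to prove the implication by structural induction on the derivation of $\Gamma \vdash_{P} M:\tau$, after committing to a single interpretation $I$ that serves the whole derivation. First I would invoke the Interpretation Existence lemma (Lemma \ref{IE}) to obtain an $I$ consistent with $type$ on every constant and function symbol occurring in the derivation, so that $|[ t|]_{I,\sigma} \in \mathbf{T}|[ type(t)|]_I$ holds throughout. For each predicate symbol $p$ appearing in the derivation I would additionally fix $I(p)=f_p$ to be \emph{any} total map whose domain is the union, under $\sim$, of the argument domains of the types assigned to $p$'s arguments in the derivation, and whose codomain is $Bool$; this is available because there are infinitely many interpretations and only finitely many symbols occur in a finite derivation. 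The guiding observation is that Definition \ref{STR} records only non-wrongness, i.e. membership in $\mathbf{T}|[ \tau|]_I$ (and $\mathbf{T}|[ bool|]_I=\{true,false\}$), and that in the weak Kleene connectives of Table \ref{con} a result is non-\emph{wrong} precisely when all of its operands are non-\emph{wrong}. This is exactly what lets errors be tracked compositionally along the induction.

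The base and term cases are routine. Case VAR is immediate from Definitions \ref{JFS} and \ref{ContextSemantics}, since $|[ \Gamma|]_{I,\sigma}$ already forces $|[ X|]_{I,\sigma}\in\mathbf{T}|[ \tau|]_I$; case CST follows directly from the choice of $I$. For CPL, the induction hypothesis places each $|[ t_i|]_{I,\sigma}$ in $\mathbf{T}|[ \tau_i|]_I$, so their domains are contained in $domain(I(f))$ and the \textbf{then} branch of the term semantics fires, yielding a value in $\mathbf{T}|[ \tau|]_I$ rather than \emph{wrong}. For UNF, both sides carry the same type $\tau$, so both values lie in $\mathbf{T}|[ \tau|]_I$ and inhabit a common domain; the equality semantics then returns $true$ or $false$ and never reaches its final \textbf{else} branch. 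For CON, each conjunct is non-\emph{wrong} by the induction hypothesis, hence so is their conjunction; here $or\_degree=1$, so no context splitting occurs.

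The call and clause cases carry the real content. For CLL I would combine the subtyping hypotheses $\tau_i\leq\tau_i\prime$ with the Soundness of the subtyping relation lemma, together with the fact that $\tau\leq\tau\prime$ entails $\mathbf{T}|[ \tau|]_I\subseteq\mathbf{T}|[ \tau\prime|]_I$, to conclude that the call arguments land inside the argument domains fixed for $I(p)$, so $|[ p(X_1,\dots,X_n)|]_{I,\sigma}$ takes its \textbf{then} branch and is a $Bool$ value. For CLS I would take the witnessing list $[\Gamma_1,\dots,\Gamma_m]$ of Definition \ref{STR} to be the branch contexts $\Gamma_i=\Gamma\prime\cup\{\bar X:\bar{\tau_i}\}$ of the rule, whose $\oplus$-sum (Definition \ref{OPLUS}) is the clause context $\Gamma$ and whose length matches $or\_degree=m$. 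Assuming each $|[ \Gamma_i|]_{I,[\sigma_i]}$, the induction hypothesis on the bodies makes every $|[ sg_i|]_{I,\sigma_i}$ non-\emph{wrong}, so their disjunction is non-\emph{wrong}; each head call $|[ q(\bar X)|]_{I,[\sigma_i]}$ is non-\emph{wrong} by the domain chosen for $I(q)$; and since implication is $\neg p\vee q$, the whole clause value is non-\emph{wrong}, i.e. in $\mathbf{T}|[ bool|]_I$. The RCLS case is identical except that the recursive calls $p(\bar Y_{jk})$ sit inside the bodies as extra conjuncts: they are typed by CLL in the sub-derivation, and the Monomorphism Restriction (Definition \ref{MR}), enforced by the rule through giving the recursive-call variables the head types, is precisely what lets CLL apply, after which the induction hypothesis again delivers non-\emph{wrong} bodies.

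The hard part will be the clause cases, and in particular making the construction of $I$ airtight for (mutually) recursive predicates. One must align the $\oplus$-decomposition of $\Gamma$ with the per-branch states $\sigma_1,\dots,\sigma_m$ and verify that the implication $body\implies head$ of the clause semantics never evaluates to \emph{wrong} under the assumption that every $|[ \Gamma_i|]_{I,[\sigma_i]}$ holds; the idempotence $\mathbf{T}|[ \tau+\tau|]_I=\mathbf{T}|[ \tau|]_I$ reconciles the $\oplus$-sum with any assumptions shared across branches. What keeps this tractable is that the theorem asks only for non-wrongness and not for any particular truth value: the semantic function $f_p$ of a recursive predicate may be chosen as an arbitrary total map on the domain read off from the derivation, so no genuine least-fixed-point computation of $f_p$ is required and the apparent circularity between the occurrences of $p$ in the head and in the body dissolves. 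The residual work is bookkeeping: checking that the finitely many domain choices for the predicate symbols are mutually consistent across all sub-derivations, which holds because the Monomorphism Restriction pins a single argument type to each predicate argument throughout its definition.
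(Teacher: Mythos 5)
Your proposal is correct and follows essentially the same route as the paper's proof: structural induction over the typing rules, the Interpretation Existence lemma for constants and function symbols, the soundness of the subtyping relation for the CLL case, a choice of the domains of $I(p)$ to match the argument types read off the derivation, and propagation of non-wrongness through the weak Kleene connectives, with the branch contexts $\Gamma_i$ serving as the witnessing list for the clause cases. The only organizational difference is that you fix a single interpretation $I$ before the induction, whereas the paper obtains a separate interpretation $I_i$ from each induction hypothesis and merges them case by case; this reordering of the existential does not change the substance of any case.
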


\begin{proof}[Proof of Theorem \ref{SS} (Semantic Soundness):]
The proof of this theorem follows by structural induction on $M$.\\

\noindent Base cases:

\begin{itemize}
\item VAR: We know that $X:\tau \in \Gamma$ and we want to prove that there is an interpretation $I$ such that $\Gamma \models_{P,I} X:\tau$, which corresponds to proving $\exists I. \forall \sigma. \Big[ |[ \Gamma|]_{I,\sigma} \implies \mathbf{|[ }X|]_{I,\sigma} \in \mathbf{T}|[ \tau|]_I \Big]$. For any $I$, suppose for some $\sigma$, $|[ \Gamma|]_{I,\sigma}$ is false, then the implication is true and we get the result we wanted. Suppose for some other $\sigma$, $|[ \Gamma|]_{I,\sigma}$ is true. Then it follows by Definition \ref{ContextSemantics} that the right side of the implication is true, which means the whole implication is also true.

\item CST: We want to prove that $\exists I. \Gamma \models_{P} c:\tau$, which corresponds to proving $\exists I. \forall \sigma. \Big[ |[ \Gamma|]_{I,\sigma} \implies \mathbf{|[}c|]_{I,\sigma} \in \mathbf{T}|[ \tau|]_I \Big]$. From lemma \ref{IE}, there is always an interpretation $I$ such that the right side of the implication is true. Therefore on that interpretation $I$, the implication is true for any $\sigma$.
\end{itemize}

\noindent Inductive Step:

\begin{itemize}
\item CPL: We want to prove that $\exists I. (\Gamma \models_{P,I} f(t_1,\dots , t_n) : \tau)$, which corresponds to proving the following: $\exists I. \forall \sigma. \Big[ |[ \Gamma|]_{I,\sigma} \implies \mathbf{|[ }f(t_1, \dots , t_n)|]_{I,\sigma} \in \mathbf{T}|[ \tau|]_I \Big]$. By the induction hypothesis, we know that $\exists I_i. \Gamma \models_{P,I_i} t_i:\tau_i, \forall 1 \leq i \leq n$. Let $I$ be an interpretation such that $|[t_i|]_{I_i,\sigma} = |[t_i|]_{I,\sigma}$. For that $I$, for any state $\sigma$, if $domain(I(f)) ~ \tau_1 \times \dots \times \tau_n \rightarrow \tau$ then the implication is true. If $domain(I(f))$ is not associated with the correct type, then we can always create an interpretation $I\prime$ that is the same as $I$ for all constants, function symbols and predicate symbols, but differ on $f$, such that the relation $domain(I(f)) ~ \tau_1 \times \dots \times \tau_n \rightarrow \tau$ holds. For all states $\sigma\prime$ that make the left side of the implication false the implication is trivially true. Therefore, for that $I$ or $I\prime$, for any $\sigma$, $|[ f(t_1, \dots , t_n)|]_{I,\sigma} \in \mathbf{T}|[ \tau|]_I$.

\item UNF: We want to prove $\exists I. \Gamma \models_{P,I} t_1 = t_2 : bool$, which corresponds to proving $\exists I.\forall \sigma. \Big[ |[ \Gamma|]_{I,\sigma} \implies \mathbf{|[ }t_1 = t_2|]_{I,\sigma} \in \mathbf{T}|[ bool|]_I \Big]$. By the induction hypothesis, we know that $\exists I_1.\Gamma \models_{P,I_1} t_1:\tau$ and $\exists I_2. \Gamma \models_{P,I_2} t_2:\tau$. Let $I$ be an interpretation constructed from $I_1$ and $I_2$ in such a way that, for any $\sigma$, $|[t_1|]_{I_1,\sigma} = |[t_1|]_{I,\sigma}$ and $|[t_1|]_{I_2,\sigma} = |[t_1|]_{I,\sigma}$. In this $I$, we know that $|[t_1|]_{I,\sigma}$ and $|[t_2|]_{I,\sigma}$ belong to the same semantic domain since there is only one domain associated with $\tau$, which means the semantics of $t_1 = t_2$ is never $wrong$, i.e. is in $\mathbf{T}|[bool|]_I$. Therefore, for any $\sigma$, $\Gamma \models_{P,I} t_1 = t_2 : bool$.

\item CLL: We want to prove $\exists I. \Gamma \cup \{X_1:\tau_1, \dots, X_n:\tau_n\} \models_{P,I} p(X_1,\dots, X_n) : bool$, which corresponds to proving $\exists I. \forall \sigma. \Big[ |[ \Gamma|]_{I,\sigma} \implies |[p(X_1,\dots, X_n)|]_{I,\sigma} \in \mathbf{T}|[ bool|]_I \Big]$. By the induction hypothesis, we know that $\exists I. \Gamma \cup \{Y_1:\tau_1\prime, \dots, Y_n:\tau_n\prime\} \models_{P} p(Y_1,\dots,Y_n):- body. : bool$, which means $|[ p|]_{I,\sigma} \in \mathbf{P}|[ \tau_1\prime \times \dots \times \tau_n\prime \rightarrow bool|]_I$. Because $\forall i = 1,\dots,n .~\tau_i \leq \tau\prime_i$, then, by definition \ref{Subtyping}, we have that $\tau_1\prime \times \dots \times \tau_n\prime \to Bool \leq \tau_1 \times \dots \times \tau_n \to Bool$. Thus, from lemma \ref{TSS}, it follows that $|[ p|]_{I,\sigma} \in |[ \tau_1 \times \dots \times \tau_n \rightarrow bool|]_I$. Therefore, for I, for any $\sigma$, $\Gamma \models_{P,I} p(X_1,\dots,X_n) : bool$.

\item CON: We want to prove $\exists I. \Gamma \models_{P,I} g_1,\dots,g_n :bool$, which corresponds to proving $\exists I. \forall \sigma. \\ \Big[ |[ \Gamma|]_{I,\sigma} \implies |[g_1,\dots,g_n|]_{I,\sigma} \in \mathbf{T}|[ bool|]_I \Big]$. By the induction hypothesis, for $1 \leq i \leq n$, $\exists I_i. \Gamma \models_{P,I_i} g_i :bool$. Let $I$ be an interpretation such that $|[c_i|]_{I_i,\sigma} = |[c_i|]_{I,\sigma}$ for all $i$ and any $\sigma$. With this interpretation, $\forall \sigma. \Big[ |[ \Gamma|]_{I,\sigma} \implies \mathbf{|[ }g_1,\dots,g_n|]_{I,\sigma} \in \mathbf{T}|[ bool|]_I \Big]$.

\item CLS: We want to prove $\exists I. \Gamma \models_{P,I} p(X_1,\dots , X_n) :- b_1 ; \dots ; b_m. : bool$, meaning $\exists I.\exists [\Gamma_1,\dots,\Gamma_m]. \\ \forall \bar{\sigma} = [\sigma_1,\dots,\sigma_m] . \Big[ |[ \Gamma_1|]_{I,\sigma_1} \wedge \dots \wedge |[\Gamma_m|]_{I,\sigma_m} \implies \mathbf{|[ }p(X_1,\dots , X_n) :- b_1; \dots ; b_m.|]_{I,\bar{\sigma}} \in \mathbf{T}|[ bool|]_I \Big]$. By the induction hypothesis, we know that $\exists I_i. \Gamma_i \models_{P,I_i} b_i :bool$, for $1 \leq i \leq m$, where $\Gamma_i$ is the same for all variables except $\bar{X}$. The fact that $\Gamma_1 \oplus \dots \oplus \Gamma_m = \Gamma$ comes from the contexts being the same for all variables except $\bar{X}$ and from $\Gamma$ having the assumptions $\bar{X}:\bar{\tau_1}+\dots + \bar{\tau_m}$. We know that each $b_i$ is modelled by $I_i$ for all $\sigma$, then let $I\prime$ be an interpretation such that for all $\sigma$ and all $i$, $|[b_i|]_{I_i,\sigma} = |[b_i|]_{I\prime,\sigma}$. If $I\prime$ is such that $I\prime(p) = f :: D_1 \times \dots \times D_n \rightarrow D$ and $\tau_{1,i} + \dots + \tau_{m,i} \sim D_i$, for $1 \leq i \leq n$, then let $I = I\prime$, else we can create $I$ by changing only the semantic function of $p$ in $I\prime$, for an $f$ that is as described above. Then we have that $\Gamma \models_{P,I} p(X_1,\dots , X_n) :- b_1, \dots , b_m. : bool$.

\item RCLS: We want to prove $\exists I. \Gamma \models_{P,I} p(X_1,\dots , X_n) :- b_1, \dots , b_m. : bool$, meaning $\exists I.\exists [\Gamma_1,\dots \Gamma_k].\\  \forall \bar{\sigma} = [\sigma_1,\dots,\sigma_k] . \Big[ |[ (p(\bar{X}):- b_1;\dots;b_m; b_{m+1},p(\bar{Y}_{11}),\dots, p(\bar{Y}_{1k_1}); \dots; b_{m+n},p(\bar{Y}_{n1}), \dots, \\ p(\bar{Y}_{nk_n}). |]_{I,\bar{\sigma}} \in \mathbf{T}|[ bool|]_I \Big]$. By the induction hypothesis, we know that for $1 \leq i \leq m+n$, $\exists I_i. \Gamma_i \models_{P,I_i} b_i : bool$, where $\Gamma_i$ is of the form that occurs in the rule. We also know that $\Gamma_1 \oplus \dots \oplus \Gamma_{m+n} = \Gamma$, since the contexts are the same for all variables except $\bar{X}$ and the $\bar{Y_j}$ and we can add the assumptions for the $\bar{Y_j}$ whenever they do not exist already in a context with no change in the implication since those variables only occur in one sequence of goals in the body of the clause. For $\bar{X}$, we know that $\Gamma(\bar{X}) = \Gamma_1(\bar{X}) + \dots + \Gamma_{m+n}(\bar{X})$, so we prove the previous $\oplus$ statement. Let $I\prime$ be an interpretation such that $|[b_i|]_{I_i,\sigma} = |[b_i|]_{I\prime,\sigma}$ for all $i = 1, \dots, m+n$. If $I\prime$ is such that $I\prime(p) = f :: D_1 \times \dots \times D_n \rightarrow D$ and $\tau_{1,i} + \dots + \tau_{m,i} \sim D_i$, for $1 \leq i \leq n$, then let $I = I\prime$, else we can create $I$ by changing only the semantic function of $p$ in $I\prime$, for an $f$ that is as described above. Then we get that for $I$, $\Gamma \models_{P,I} p(X_1,\dots , X_n) :- b_1, \dots , b_m. : bool$.
\end{itemize}
 
\end{proof}

\end{document}